\documentclass[10pt,a4paper]{article}

\usepackage[english]{babel}
\usepackage{xcolor}
\usepackage[a4paper,top=2cm,bottom=2cm,left=3cm,right=3cm,marginparwidth=1.75cm]{geometry}

\usepackage{graphicx}
\usepackage[colorlinks=true, allcolors=blue]{hyperref}
\usepackage{authblk}

\usepackage{amsmath}
\usepackage{amsthm}
\usepackage{amsfonts}
\usepackage{enumerate}
\usepackage{bm}
\usepackage[T1]{fontenc}
\usepackage{todonotes}

\theoremstyle{plain}
\newtheorem{theorem}{Theorem}[section]
\newtheorem{proposition}[theorem]{Proposition}
\newtheorem{lemma}[theorem]{Lemma}
\newtheorem{corollary}[theorem]{Corollary}
\newtheorem{remark}[theorem]{Remark}

\theoremstyle{definition}
\newtheorem{definition}[theorem]{Definition}
\newtheorem{example}[theorem]{Example}

\newcommand{\RR}{\mathbb R}

\newcommand{\bM}{\mathbf{M}}
\newcommand{\bF}{\mathbf{F}}
\newcommand{\bB}{\mathbf{B}}
\newcommand{\be}{\mathbf{e}}
\newcommand{\dd}{\text{d}}

\title{Towards an effective medium theory for in-vivo Magnetic Resonance: Characterizing Admissible Magnetization Dynamics}
\author[1]{Alessandro Sbrizzi\thanks{Alessandro Sbrizzi and Ray Sheombarsing contributed equally to this work. Corresponding author: a.sbrizzi@umcutrecht.nl}}
\author[1]{Miha Fuderer}
\author[1]{Ray Sheombarsing$^{\ast}$}

\affil[1]{Utrecht University Medical Center}
\begin{document}
\maketitle

\begin{abstract}
The classical Bloch equation forms the foundation of magnetic resonance theory but may be insufficient to capture the complex effective dynamics observed in heterogeneous materials such as biological tissue. Here an effective-medium strategy is proposed to describe magnetic resonance dynamics from observations at macro-scale (order 1 mm$^3$). In this framework, the classical Larmor torque is retained while the relaxation term is replaced by a general nonlinear field (i.e. the effective medium term). This field is required to preserve key physical properties, including rotational symmetry, global stability of thermal equilibrium, and dissipativity. These requirements lead to a complete characterization of admissible relaxation fields and provide a constructive framework for generating nonlinear relaxation models. The resulting theory recovers classical Bloch dynamics as a special case and provides a principled foundation for effective-medium descriptions of magnetic resonance phenomena.
\end{abstract}

\section{Introduction}

The Bloch equation is the foundational dynamical model of Nuclear Magnetic Resonance (NMR) and Magnetic Resonance Imaging (MRI) \cite{bloch1946nuclear}. It accurately describes MR signals in simple homogeneous materials and provides the basis for the design, analysis, and interpretation of MRI experiments 
\cite{brown2014magnetic}. Its impact has been so profound that the Bloch framework remains the standard language through which MR phenomena are understood nearly eighty years after its introduction.
However, modern MRI is primarily applied to biological tissue, a highly heterogeneous material whose microscopic complexity spans multiple spatial and temporal scales. Each MRI voxel contains a large ensemble of spins distributed across cellular, extracellular, and vascular environments, all interacting through a variety of physical and biological processes. While MRI measures an aggregate signal at the tissue scale, the original Bloch equation was not formulated to explicitly account for such complexity. Consequently, although it remains highly successful for describing basic MR dynamics, its direct application to quantitative characterization of tissue presents important challenges.

One manifestation of these limitations is observed in quantitative MRI (qMRI), and particularly in relaxometry. In vivo measurements of parameters such as $T_1$ and $T_2$ often exhibit protocol dependence, meaning that estimated parameter values may vary across acquisition protocols even when probing the same tissue \cite{stikov2015accuracy,bojorquez2017normal,pai2008comparative,matzat2015t2,wright2008water}. Part of this variability can be attributed to experimental factors such as off-resonance effects, imperfect B1 calibration, and eddy currents \cite{leitao2021efficiency,sled2000correction,zhao2020impact,cercignani2018quantitative}. Nevertheless, there is broad agreement that the original Bloch equation, in its 1946 form, cannot fully capture the intrinsic complexity of living tissue \cite{weiskopf2021quantitative}. This observation has motivated extensive efforts to extend the Bloch framework in order to describe increasingly complex MR dynamics.
Notable examples include multi-compartment models \cite{kroeker1986analysis}, the Bloch-Torrey equation \cite{torrey1956bloch}, which incorporates diffusion, and the Bloch-McConnell equation \cite{mcconnell1958reaction,henkelman1993quantitative}, which describes magnetization exchange between distinct molecular environments. These and related developments were originally introduced to explain NMR observations in molecular systems, well before MRI became a tool for in vivo tissue characterization. Their subsequent application to biological tissue has substantially advanced our understanding of MR signal formation and has enabled many important imaging techniques \cite{le2003looking,grossman1994magnetization}. At the same time, these extensions have not completely resolved the problem of obtaining protocol-independent tissue parameters in vivo \cite{keenan2022challenges}. More generally, they pursue a common strategy: explaining increasingly complex observations by introducing additional microscopic components, compartments, pools, or exchange pathways. While powerful, this bottom-up approach naturally leads to models of increasing dimensionality and parameterization, often requiring model simplifications or reliance on literature values when applied to tissue \cite{henkelman1993quantitative,ag2019fast,sled2018modelling,malik2018extended,asslander2025magnetization}.

Recognizing the significant advances made possible by these developments, we propose a complementary perspective. The inspiration comes from the Bloch equation itself and from the philosophy underlying its original derivation. The relaxation terms introduced by Bloch were not derived from a detailed microscopic description; rather, they were proposed phenomenologically to account for experimentally observed behavior. We suggest that a similar approach may be fruitful at the tissue scale (i.e. spatial resolution in the order of mm$^3$). Instead of seeking increasingly detailed microscopic descriptions of tissue organization, one may attempt to formulate an effective theory of \emph{tissue-scale} MR dynamics that directly captures the behavior observed by MRI. Ideally, such a theory would yield model parameters that are consistent across acquisition protocols, or at least reduce the variability that currently affects quantitative measurements.

In this paper, we introduce a theoretical framework intended to guide the development of such a phenomenological description of tissue-scale MR behavior. The framework remains grounded in Bloch's original approach while generalizing it to accommodate the complexity of biological materials. Its emphasis is on describing observable dynamics at the macroscopic scale of MRI, rather than increasing the number of microscopic constituents represented in the model. In this sense, our goal is to develop an effective-medium theory of tissue-scale MR dynamics \cite{choy2015effective,noid2013perspective}.
Specifically, we consider a class of models in which the dynamics are governed by an ordinary differential equation composed of the established rotational term (Larmor torque)  together with a more general, potentially nonlinear dissipative term. This dissipative component is intended to capture tissue-dependent effects, including relaxation, exchange, micro-structural organization, and other mechanisms that influence the observed signal. Importantly, the model retains the same dimensionality as the Bloch equation, describing the evolution of a three-dimensional magnetization vector whose transverse components remain directly observable. In this way, we avoid the progressive expansion of the state space that characterizes many multi-compartment and exchange-based models. The order reduction is motivated by earlier efforts to simplify the description of MR dynamics, particularly in the context of Magnetization Transfer (MT) modelling. For example, Graham and Henkelman \cite{graham1997understanding} reduced the six-dimensional two-pool Bloch system to an effective four-dimensional model by neglecting the transverse dynamics of the macromolecular pool. Subsequently, Teixeira et al. \cite{ag2019fast} further reduced the steady-state dynamics of the resulting four-dimensional two-pool system to an effective three-dimensional single-pool description. In their approach, the free-water dynamics are described by the conventional Bloch equations with effective parameters for thermal equilibrium state and longitudinal relaxation rate, whose values depend on the properties of both pools as well as on the applied RF irradiation. While these reduced-order models have proven effective within their respective domains of validity, their reliance on problem-specific assumptions motivates the search for a more general mathematical framework for describing magnetic resonance dynamics. The present work pursues this objective by deriving a broader class of effective dynamical models that can encompass a variety of physical mechanisms and modeling assumptions.

A fundamental requirement of the proposed framework is that the resulting dynamics reduce to the standard Bloch equation in the limit of simple homogeneous materials. To achieve this goal, we introduce a set of postulates that characterize Bloch dynamics within existing NMR theory and derive admissibility conditions on the functional form of the generalized dissipative term. Based on these admissibility criteria, we further derive general strategies for constructing candidate tissue-scale MR models and provide some  concrete examples illustrating the richer dynamics captured by the proposed generalization. The tissue-scale MR models can subsequently be confronted with experimental MRI data. The resulting framework establishes the foundations of a generalized Bloch theory, intended as a phenomenological description of biological tissue at the scale of MRI measurements.

\paragraph{Overview} The remainder of this paper is organized as follows. In
Section~\ref{sec:problem_setup}, we review the classical Bloch relaxation model and discuss the
physical principles that a generalized relaxation mechanism should satisfy. In
Section~\ref{sec:formal_characterization}, these principles are formulated in precise
mathematical terms, leading to the notion of an admissible relaxation field. We then derive a
complete characterization of all admissible relaxation fields and establish the corresponding
structure theorem. Finally, we present a constructive procedure that yields a large and flexible
family of admissible nonlinear relaxation models and provide some examples.

\section{Problem setup} 
\label{sec:problem_setup}
In this section we provide a formal description of what we mean by finding a physically admissible
generalization of the relaxation term in the Bloch equation. Throughout, $(\be_1,\be_2,\be_3)$
denotes the standard orthonormal basis of $\RR^3$, and $\langle \cdot,\cdot \rangle$ denotes the
Euclidean inner product.

\subsection{Bloch equation and relaxation dynamics}
In an idealized MRI experiment, the magnetic field $\bB$ consists of a strong homogeneous static
field in the longitudinal direction, an RF excitation field in the transverse plane, and magnetic
field gradients used for spatial encoding. These fields interact with the time-dependent macroscopic 
magnetization vector $\bM = \begin{pmatrix} M_{1} & M_{2} & M_{3} \end{pmatrix}^T
: [0,\infty) \to \RR^3$ according to the Larmor torque dynamics:
\begin{align*}
	\dot \bM(t) = \gamma \bM(t) \times \bB(t), \quad t > 0. 
\end{align*}
Here $\gamma >0$ is the gyromagnetic ratio of the nuclei under consideration. To describe observed
NMR behavior in laboratory tests, Bloch added relaxation effects and proposed his celebrated
equation:
\begin{align}
	\label{eq:bloch_eqs_lab_frame}
	\dot \bM(t) = \gamma \bM(t) \times \bB(t) - \begin{pmatrix}
		\displaystyle T_{2}^{-1} & 0 & 0 \\
		0 & \displaystyle T_{2}^{-1}& 0 \\
		0 & 0 & \displaystyle T_{1}^{-1}
	\end{pmatrix} \left( \bM - \bM^{\ast} \right), 
	\quad t > 0. 
\end{align}
The second term describes the return of the magnetization toward the thermal equilibrium state
$\bM^{\ast} = M^{\ast} \be_{3}$, where $M^{\ast}>0$, and $T_1,T_2>0$ are the longitudinal and
transverse relaxation times, respectively. 

The goal of this paper is to develop models for the MR dynamics that generalize
\eqref{eq:bloch_eqs_lab_frame}. We keep the established rotational part (Larmor torque) as it is and
focus on the phenomenological part, the relaxation term. To better study the latter we isolate it by
assuming the RF excitation and gradient fields have been switched off. Since we are interested only
in post-excitation relaxation, we work in the usual rotating frame. In this frame, the contribution
of the static magnetic field does not appear in the ideal on-resonance dynamics. Consequently, after
the RF excitation field and magnetic field gradients have been switched off, the magnetization $\bM$  evolves according to
\begin{align}
	\label{eq:bloch_eqs_rot_frame}
	\begin{cases}
		\dot{\bM}(t) = \bF \left( \bM(t) \right), & t > 0, \\
		\bM(0) = \bM_{0},
	\end{cases}
\end{align}
where $\bF: \RR^{3} \rightarrow \RR^{3}$ is the relaxation vector field given by
\begin{align}
    \label{eq:affine_relaxation_field}
	\bF(\bM) = - \begin{pmatrix}
		\displaystyle T_{2}^{-1} & 0 & 0 \\
		0 & \displaystyle T_{2}^{-1}& 0 \\
		0 & 0 & \displaystyle T_{1}^{-1}
	\end{pmatrix} \left( \bM - \bM^{\ast} \right).
\end{align}
Here we have abused notation and denoted the magnetization in the rotating frame by $\bM$ as well.
The initial state $\bM_0\in\RR^3$ represents the magnetization at the instant that the RF excitation
and encoding gradients are switched off.

In this paper we replace the affine relaxation vector field in \eqref{eq:affine_relaxation_field} 
with a general class of nonlinear vector fields $\bF$ and interpret $\bM$ as the vector whose transverse 
components describe the macroscopically observed magnetization. Note that $\bM$ is thus an effective 
medium description of all microscopic contributions to the MR signal and may encompass multiple molecular 
environments, exchange effects, microscopic motion and other sub-voxel phenomena. We identify nonlinear 
relaxation models for $\bM$ that retain the essential physical features of classical Bloch relaxation 
(see Section \ref{sec:physical_constraints}) while allowing for more general relaxation behavior. 

\subsection{Physical constraints}
\label{sec:physical_constraints}
Since we aim at principles that can guide
the search for phenomenological descriptions of $\bF$, we set out to determine physically plausible
constraints for it. We first discuss these constraints informally and explain their physical
motivation. Afterwards, we provide a precise mathematical formulation.

The requirements we impose for physical admissibility are:
\begin{enumerate}
	\item{(Equivariance)} $\bF$ is equivariant with respect to rotations around $\be_{3}$.
	\item{(Thermal equilibrium)} $\bM^{\ast}$ is the unique equilibrium of $\bF$. 
	\item{(Global asymptotic stability)} The magnetization relaxes to $\bM^{\ast}$ 
        for any initial condition $\bM_0$. 
    \item{(Dissipativity)} The dynamics admit an ``energy'' function that acts as a Lyapunov 
        potential and is non-increasing along trajectories.
\end{enumerate}
The first requirement reflects the axial symmetry of the relaxation process about the longitudinal
direction $\be_{3}$. Once the RF excitation and gradient fields have been switched off, no preferred
direction remains in the transverse plane $\text{span}\{\be_{1}, \be_{2} \}$.  Consequently,
magnetization states that differ only by a rotation about $\be_{3}$ should exhibit identical
relaxation behavior, up to the same rotation.

Requirements $(2)$ and $(3)$ reflect the experimentally observed tendency of the magnetization to
recover toward the thermal equilibrium state $\bM^{\ast}$. In the classical Bloch model, these
properties are reflected in two characteristic relaxation mechanisms: decay of the transverse
magnetization and recovery of the longitudinal component. The requirement that $\bM^{\ast}$ is
the unique equilibrium, and is globally asymptotically stable, guarantees that every integral 
curve of $\bF$ converges to the thermal equilibrium state, irrespective of its initial condition. 

The final requirement $(4)$ reflects the absence of any external driving mechanism once the RF
excitation and gradient fields have been switched off. At that stage, no external source remains
that can drive the magnetization further away from thermal equilibrium, and the subsequent evolution
should therefore be purely dissipative. To express this mathematically, we require the (squared)
Euclidean distance $V(x) = \frac{1}{2}\Vert x - \bM^{\ast} \Vert_{2}^{2}$ to the equilibrium 
$\bM^{\ast}$ to strictly decrease along every non-equilibrium integral curve. Roughly speaking,
$V$ may be interpreted as a potential energy-like measure of deviation from the equilibrium. This
condition is substantially stronger than requirement (4) and global asymptotic stability, as it 
rules out even temporary excursions away from equilibrium. As we shall see later, the dissipativity 
condition is very closely related to requirements $(2)$ and $(3)$.

Taken together, conditions $(1)$--$(4)$ capture the symmetry, stability, and dissipation properties
that we regard as essential for physically admissible generalizations of the Bloch relaxation term.
As it turns out, these seemingly modest requirements place strong restrictions on the structure of
admissible relaxation vector fields.

\section{Formal characterization of admissible relaxation fields}
\label{sec:formal_characterization}
In this section we provide a formal characterization of physically admissible relaxation fields
satisfying conditions $(1)$--$(4)$. We derive a complete and explicit description of all such vector
fields and conclude with a constructive recipe for generating them. 

\paragraph{Notation} 
Throughout, we consider vector fields $\bF:\RR^3\to\RR^3$ whose integral curves describe the time
evolution (relaxation) of the magnetization. To distinguish integral curves from generic points in phase space, we
write $x=(x_1,x_2,x_3)\in\RR^3$ for points and $\bM:[0,T]\to\RR^3$ for integral curves of $\bF$. Furthermore, 
we will frequently encounter functions $G: \RR^{3} \rightarrow \RR$ of the form 
\begin{align*} 
    G(x)=g(r,x_3), \qquad
    r=r(x_1,x_2)=\sqrt{x_1^2+x_2^2},
\end{align*} 
where $g:[0,\infty)\times\RR\to\RR$ is given. When discussing properties of $g$, we denote its first argument
by $s$ and write $g(s,x_3)$, reserving $r$ for the radial coordinate.

\subsection{Admissible relaxation fields}
In order to provide a full characterization of physically plausible relaxation vector fields, we
first formulate the physical requirements of Section~\ref{sec:physical_constraints} in precise 
mathematical terms. To this end, let $R_{3}(\theta) \in \mathrm{SO}(3)$ denote the rotation operator 
around $\be_{3}$ with angle $\theta$, i.e.,
\begin{align*}
	R_{3}(\theta) = 
	\begin{pmatrix}
		\cos(\theta) & - \sin(\theta) & 0 \\
		\sin(\theta) &  \cos(\theta) & 0 \\
		0 & 0 & 1
	\end{pmatrix}.
\end{align*} 
\begin{definition}[Admissible relaxation fields]
	\label{def:admissible_relaxation_fields}
    A continuously differentiable vector field $\bF: \RR^{3} \rightarrow \RR^{3}$ is called an
    admissible relaxation field if the following conditions hold: 
	\begin{enumerate}
		\item[(\textbf{C1})] $\bF( R_{3}(\theta) x ) = R_{3}(\theta) \bF(x)$ for all $x \in \RR^{3}$ and $\theta \in [0, 2 \pi)$. 
		\item[\textbf{(C2)}] $\bF(\bM^{\ast}) = 0$ and $\bF(x) \not = 0$ for all $x \in \RR^{3} \setminus \{ \bM^{\ast} \}$. 
		\item[\textbf{(C3)}] $\sigma \left( D\bF(\bM^{\ast}) \right) \subset (-\infty, 0)$, where $D\bF$ denotes the derivative 
                             of $\bF$ and $\sigma$ its spectrum. 
		\item[\textbf{(C4)}] Every non-equilibrium integral curve $\bM:[0, \tau(\bM_{0})) \rightarrow \RR^{3}$ of $\bF$ satisfies 
                            \begin{align*}
                                \dfrac{\dd}{\dd t} \left( t \mapsto V(\bM(t)) \right) < 0, \quad \forall t \in [0,  \tau(\bM_{0})),
                            \end{align*} 
                            where $\tau(\bM_{0}) >0$ is the maximal integration time associated to the initial condition 
                            $\bM_{0} \in \RR^{3} \setminus \{ \bM^{\ast} \}$, and $V: \RR^{3} \rightarrow [0, \infty)$ is defined 
                            by $V(x) := \frac{1}{2} \Vert x - \bM^{\ast} \Vert_{2}^{2}$. 
		\end{enumerate}
\end{definition}

The relationship between Definition~\ref{def:admissible_relaxation_fields} and the physical
requirements of Section~\ref{sec:physical_constraints} is straightforward. Conditions \textbf{(C1)},
\textbf{(C2)}, and \textbf{(C4)} formalize requirements $(1)$, $(2)$, and $(4)$, respectively.
Moreover, \textbf{(C4)} implies that solutions exist for all positive times, as trajectories remain
confined to bounded sublevel sets of $V$, thereby excluding finite-time blow-up. This is of course a
minimal requirement  for physical consistency, which we left implicit until now. Furthermore,
conditions \textbf{(C3)} and \textbf{(C4)} imply that the unique equilibrium $\bM^\ast$ is a globally
asymptotically stable hyperbolic sink with real negative eigenvalues. 

Condition \textbf{(C1)} imposes a strong restriction on the linearization at the equilibrium. In
particular, in combination with $\textbf{(C3)}$, it forces $D\bF(\bM^\ast)$ to have the same
structure as the linear relaxation operator of the classical Bloch model. As a result, the
\emph{linearized} dynamics around $\bM^{\ast}$ are characterized by two distinguished decay 
rates: one governing relaxation in the transverse plane and one governing relaxation along the 
longitudinal direction.

A formal statement and proof of the above assertions, and in particular that admissible relaxation
fields satisfy the desired physical constraints $(1)$--$(4)$, can be found in Appendix \ref{sec:appendix}.

\begin{remark}[Strict Lyapunov function]
    Condition (\textbf{C4}) is equivalent to the statement that $V$ is strictly decreasing along 
    non-equilibrium integral curves of $\bF$. In particular, this implies that $V$ is a so-called strict 
    Lyapunov function for $\bM^{\ast}$, since $V( \bM^{\ast}) = 0$, $V(x) > 0$ for $x \in \RR^{3} \setminus 
    \{ \bM^{\ast} \}$, and $V$ is radially unbounded. Such functions may be viewed as a 
    generalization of the notion of a potential and are a characteristic feature of gradient-like 
    dynamical systems. The proof of global asymptotic stability relies heavily on the existence of 
    this strict Lyapunov function, see Appendix \ref{appendix:global_asymptotic_stability}.
\end{remark}

\subsection{Characterization of admissible relaxation fields}
We now provide a complete characterization of admissible relaxation fields. Effectively,
equivariance with respect to rotations about $\be_{3}$ forces the vector field to have a highly
structured form, which in turn enables an easy characterization of conditions
$(\textbf{C2})$--$(\textbf{C4})$. Although rotational symmetry is most naturally analyzed in
cylindrical coordinates, such coordinates are singular on the longitudinal axis, that is, on 
$\text{span}\{ \be_{3} \}$. Since the equilibrium state $\bM^{\ast}$ lies on this axis, 
conditions \textbf{(C2)}--\textbf{(C4)} are most conveniently studied in Cartesian coordinates. 
For this reason, we formulate the characterization theorem in Cartesian coordinates.
\begin{theorem}[Characterization of admissible relaxation fields]
	\label{thm:admissible_relaxation_fields}
    Let $\bF: \RR^{3} \rightarrow \RR^{3}$ be a continuously differentiable vector field. Then $\bF$
    is an admissible relaxation field if and only if there exists $\alpha, \beta \in C^{1}\left( (0,
    \infty) \times \RR \right) \cap C([0, \infty) \times \RR)$, $\zeta \in C^{1}([0, \infty) \times
    \RR)$ such that
	\begin{align}
		\label{eq:admissible_relaxation_field}
		\bF(x) = 
		\begin{pmatrix}
			\alpha(r,x_{3}) & -\beta(r, x_{3}) & 0 \\
			\beta(r, x_{3}) & \alpha(r, x_{3}) & 0 \\
			0 & 0 & 0 
		\end{pmatrix}x + 
		\begin{pmatrix}
			0 \\ 0 \\ \zeta(r, x_{3})
		\end{pmatrix},
	\end{align}	
	where $r = r(x_{1}, x_{2}) = \sqrt{x_{1}^{2} + x_{2}^{2}}$, and
	\begin{itemize}
	    \item[$(i)$] For all $\tilde x_{3} \in \RR$, we have 
            \begin{align*}
                \lim_{(s, x_{3}) \rightarrow (0, \tilde x_{3})} s \partial_{s} \alpha(s, x_{3}) &= 0, &
                \lim_{(s, x_{3}) \rightarrow (0, \tilde x_{3})}  s \partial_{x_{3}} \alpha(s, x_{3}) &= 0, & \\[2ex]
                \lim_{(s, x_{3}) \rightarrow (0, \tilde x_{3})}  s \partial_{s} \beta(s, x_{3}) &= 0, &
                \lim_{(s, x_{3}) \rightarrow (0, \tilde x_{3})} s \partial_{x_{3}} \beta(s, x_{3})&= 0,
            \end{align*}
            and moreover, $\partial_{s} \zeta(0, \cdot) \equiv 0$.
        \item[$(ii)$]  The only zero of $\zeta(0, \cdot)$ is $M^{\ast}$.
        \item[$(iii)$] $\alpha(0,M^{\ast}) < 0$, $\beta(0, M^{\ast})=0$ and
            $\partial_{x_{3}}\zeta(0,M^{\ast}) < 0$.
        \item[$(iv)$] $s^{2} \alpha(s,x_{3}) + (x_{3}-M^{\ast}) \zeta(s, x_{3}) < 0$, $\forall (s,
            x_{3}) \in \left( [0, \infty) \times \RR \right) \setminus \{(0, M^{\ast})\}$. 
	\end{itemize}
\end{theorem}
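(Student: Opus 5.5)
We prove the two directions separately, and each condition (C1)--(C4) will match one structural piece or one of $(i)$--$(iv)$.

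\textbf{Structure from (C1).} Write $\bF=(F_1,F_2,F_3)$. Equivariance gives $F_3(R_3(\theta)x)=F_3(x)$, so $F_3$ depends only on $(r,x_3)$. We set $\zeta(s,x_3):=F_3(s,0,x_3)$, which is $C^1$ on $[0,\infty)\times\RR$ as a restriction of a $C^1$ function.

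For the transverse part $F_\perp=(F_1,F_2)$, evaluate on the half-plane $x_2=0$, $x_1=s>0$ and define
\[
\alpha(s,x_3):=F_1(s,0,x_3)/s,\qquad \beta(s,x_3):=F_2(s,0,x_3)/s .
\]
Rotating a general point $x$ to $(r,0,x_3)$ and applying (C1) yields \eqref{eq:admissible_relaxation_field} for $r>0$. On the axis, (C1) with $x=x_3\be_3$ forces $F_\perp(0,0,x_3)$ to be fixed by all planar rotations, hence zero. So the formula also holds at $r=0$, whatever value $\alpha,\beta$ take there.

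\textbf{Regularity at $s=0$.} Since $F_1(0,0,x_3)=0$, we can write
\[
F_1(s,0,x_3)/s=\int_0^1\partial_1F_1(ts,0,x_3)\,\dd t .
\]
This extends $\alpha$ continuously to $s=0$ with $\alpha(0,x_3)=\partial_1F_1(0,0,x_3)$; the same works for $\beta$. Clearly $\alpha,\beta\in C^1$ for $s>0$.

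For $(i)$, compute
\[
s\partial_s\alpha=\partial_1F_1(s,0,x_3)-\alpha,\qquad s\partial_{x_3}\alpha=\partial_3F_1(s,0,x_3).
\]
The first tends to $0$ by continuity of $\partial_1F_1$. The second tends to $\partial_3F_1(0,0,\tilde x_3)=0$, because $F_1$ vanishes on the axis. For $\partial_s\zeta(0,\cdot)=0$: $F_3$ is even in $x_1$ along $x_2=0$ by rotation by $\pi$, so $\partial_1F_3(0,0,x_3)=0$.

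\textbf{Converse regularity.} Given $\alpha,\beta,\zeta$ satisfying $(i)$, we must show \eqref{eq:admissible_relaxation_field} is $C^1$ on the axis. I expect this to be the main technical step. Away from the axis it is clear, since $r$ is smooth there. On the axis, differentiate $\alpha(r,x_3)x_1$ directly. The derivative at the axis is $\alpha(0,x_3)$ by continuity of $\alpha$. Off the axis, the partial derivatives are
\[
\partial_{x_1}\bigl(\alpha x_1\bigr)=\alpha+(s\partial_s\alpha)\,\frac{x_1^2}{r^2},\qquad \partial_{x_2}\bigl(\alpha x_1\bigr)=(s\partial_s\alpha)\,\frac{x_1x_2}{r^2},\qquad \partial_{x_3}\bigl(\alpha x_1\bigr)=(s\partial_{x_3}\alpha)\,\frac{x_1}{r}.
\]
The factors $x_1^2/r^2$, $x_1x_2/r^2$, $x_1/r$ are bounded, so by $(i)$ these converge to the axis values. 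For $\zeta(r,x_3)$, the chain rule gives $\partial_{x_i}=\partial_s\zeta\cdot x_i/r$ for $i=1,2$, which tends to $0=\partial_s\zeta(0,x_3)$ by continuity. Finally, (C1) is immediate, because the matrix in \eqref{eq:admissible_relaxation_field} commutes with $R_3(\theta)$ and $r$ is invariant under it.

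\textbf{(C2), (C3), (C4) versus $(ii)$--$(iv)$.} Along an integral curve,
\[
\dot V=\langle x-\bM^\ast,\bF(x)\rangle=\alpha r^2+(x_3-M^\ast)\zeta ,
\]
since the $\beta$ terms cancel. Hence (C4) is equivalent to $(iv)$. Indeed, every non-equilibrium point is the initial point of some integral curve, and $(r,x_3)$ ranges over all of $[0,\infty)\times\RR$.

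On the axis, $\bF=\zeta(0,x_3)\be_3$. Off the axis, $\bF\neq0$ is guaranteed by $(iv)$, since a zero of $\bF$ would give $\dot V=0$. So under $(iv)$, (C2) reduces to $(ii)$. Conversely, (C2) gives $(ii)$ directly.

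At $\bM^\ast$, the computations above give
\[
D\bF(\bM^\ast)=\begin{pmatrix}\alpha_0&-\beta_0&0\\\beta_0&\alpha_0&0\\0&0&\partial_{x_3}\zeta(0,M^\ast)\end{pmatrix},
\]
where $\alpha_0=\alpha(0,M^\ast)$ and $\beta_0=\beta(0,M^\ast)$. Its spectrum is $\{\alpha_0\pm i\beta_0,\ \partial_{x_3}\zeta(0,M^\ast)\}$. Therefore (C3) holds if and only if $\beta_0=0$, $\alpha_0<0$ and $\partial_{x_3}\zeta(0,M^\ast)<0$, which is exactly $(iii)$.
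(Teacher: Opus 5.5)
Your proof is correct and follows essentially the same route as the paper: you restrict to the half-plane $x_2=0$ and set $\alpha=F_1(s,0,x_3)/s$, extend $\alpha,\beta$ continuously to $s=0$ by the same integral representation, obtain $(i)$ from the same identities for $s\partial_s\alpha$ and $s\partial_{x_3}\alpha$, and use the same identity $\langle \nabla V,\bF\rangle = r^2\alpha+(x_3-M^\ast)\zeta$ to handle (C2) and (C4). The only cosmetic differences are that you get $\partial_s\zeta(0,\cdot)\equiv 0$ from evenness under the rotation by $\pi$ instead of the one-sided difference quotients of Lemma~\ref{lemma:differentiability_e3}, and you read off $D\bF(\bM^\ast)$ directly from the formula instead of deriving its form from commutation with $R_3(\theta)$.
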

\begin{remark}[Notation]
    We shall frequently denote the vector field in \eqref{eq:admissible_relaxation_field} by
    $\bF_{(\alpha, \beta, \zeta)}$. 
\end{remark}
We refer the reader to Appendix \ref{appendix:proofs_main_theorems} for a proof and restrict
ourselves here to a brief discussion of the result. The equivariance condition \textbf{(C1)} is
equivalent to the requirement that $\bF$ admits the representation in 
\eqref{eq:admissible_relaxation_field} with components $(\alpha, \beta, \zeta)$ satisfying $(i)$.
The conditions in $(i)$ are precisely those needed to ensure that the resulting vector field extends 
smoothly to the symmetry axis $\text{span}\{\be_{3}\}$. Furthermore, assumptions $(ii)$--$(iv)$ 
together encode conditions \textbf{(C2)}--\textbf{(C4)}. 
\begin{example}[Classical Bloch model]
	The classical Bloch relaxation term is recovered as the special case
	\begin{align*}
		\alpha \equiv - \frac{1}{T_{2}}, \quad
		\beta(s, x_{3}) \equiv 0, \quad
		\zeta(s, x_{3}) = -\frac{x_{3} - M^{\ast}}{T_{1}},
	\end{align*}
    which clearly satisfies the assumptions of Theorem \ref{thm:admissible_relaxation_fields}. 
\end{example}

The interpretation of the functions defining $\bF$ becomes particularly transparent in cylindrical
coordinates. The function $\alpha$ governs the transverse radial dynamics, $\beta$ governs
rotational motion around the longitudinal axis $\be_{3}$, and $\zeta$ determines the longitudinal
dynamics. This is most clearly seen in cylindrical coordinates, in which
\eqref{eq:admissible_relaxation_field} takes the form
\begin{align*}
	\bF_{(\alpha, \beta, \zeta)}(r, \theta, x_{3}) \equiv
	\bF_{(\alpha, \beta, \zeta)}(r, x_{3}) =
	\begin{pmatrix}
		r \alpha(r, x_{3}) \\
		\beta(r, x_{3}) \\
		\zeta(r, x_{3})
	\end{pmatrix}, \quad 
	(r, \theta, x_{3}) \in (0, \infty) \times [0, 2 \pi) \times \RR. 
\end{align*}
Therefore, if $t \mapsto \left( M_{r}(t), M_{\theta}(t), M_{3}(t) \right)$ is the
cylindrical-coordinate representation of an integral curve of $\bF_{(\alpha, \beta, \zeta)}$, then 
\begin{align}
	\label{eq:cylindrical_coords}
	\begin{cases}
		\dot M_{r} = M_{r} \alpha(M_{r}, M_{3}), \\
		\dot M_{\theta} = \beta(M_{r}, M_{3}), \\
		\dot M_{3} = \zeta(M_{r}, M_{3}).
	\end{cases}
\end{align}
This representation describes the transverse radial dynamics, the rotational dynamics, and
the longitudinal dynamics more clearly than the Cartesian representation.

\subsection{Explicit construction of admissible relaxation fields}
While Theorem~\ref{thm:admissible_relaxation_fields} provides a complete characterization of
admissible relaxation fields, it does not directly yield a practical recipe for constructing
functions $(\alpha,\beta,\zeta)$ satisfying the required assumptions. In this section we propose
such a construction and obtain a computable parameterization of a large family of admissible
relaxation fields. As before, we only outline the main ideas and results; the complete details are
deferred to Appendices \ref{appendix:proofs_main_theorems} and 
\ref{appendix:construction_admissible_fields}.

The main idea behind the construction is to exploit the dissipativity condition $(iv)$, which
imposes strong structural constraints on the pair $(\alpha,\zeta)$. This leads to a natural family
of admissible candidates, after which conditions $(i)$--$(iii)$ can be enforced. Since condition
$(i)$ is a bit technical, and may obscure the main idea, we refer the reader to Appendix
\ref{appendix:construction_admissible_fields} for an explicit way to construct such mappings.
Here we focus on conditions $(ii)$--$(iv)$ and assume we know (see Lemma's 
\ref{lemma:construction_V1} and \ref{lemma:construction_V2}) how to construct mappings in 
\begin{align*}
    \mathcal{V}_{1}
    &:= \left \{ g \in C^{1}((0,\infty)\times\mathbb{R})
    \cap C([0,\infty)\times\mathbb{R}) \;\Bigg|\;
    \begin{aligned}
    &\lim_{(s,x_{3})\to(0,\tilde{x}_{3})}
      s\,\partial_s g(s,x_{3}) = 0,\\
    &\lim_{(s,x_{3})\to(0,\tilde{x}_{3})}
      s\,\partial_{x_{3}}g(s,x_{3}) = 0,
      \ \forall\,\tilde{x}_{3}\in\mathbb{R}
    \end{aligned}
    \right \}, \\[2ex]
    \mathcal{V}_{2} &:=
    \left \{
    	g \in C^{1}([0, \infty) \times \RR) \mid \partial_{s}g(0, \cdot) \equiv 0
    \right \}.
\end{align*}
With this notation in place, condition $(i)$ of Theorem \ref{thm:admissible_relaxation_fields} is
equivalent to $\alpha, \beta \in \mathcal{V}_{1}$ and $\zeta \in \mathcal{V}_{2}$. 
\begin{proposition}[Construction of admissible relaxation fields]
    \label{prop:construction_admissible_fields}
    Let $a, \beta, \psi \in \mathcal{V}_{1}$ and $c \in \mathcal{V}_{2}$. Assume 
    $a, c >0$, $\beta(0, M^{\ast})=0$. Then 
	\begin{align}
		\label{eq:construction}
		\bF(x) &= 
		\resizebox{0.85\linewidth}{!}{
		$\begin{pmatrix}
			- a(r, x_{3}) -  (x_{3}-M^{\ast})^{2} \psi(r, x_{3}) & -\beta(r, x_{3}) & 0 \\
			\beta(r, x_{3}) & -a(r, x_{3}) -  (x_{3}-M^{\ast})^{2} \psi(r, x_{3}) & 0 \\
			0 & 0 & r^{2} \psi(r, x_{3}) - c(r, x_{3})
		\end{pmatrix}$} x \nonumber \\[2ex] & \quad + 
		M^{\ast} \begin{pmatrix}
			0 \\ 0 \\ c(r, x_{3}) - r^{2} \psi(r, x_{3})
		\end{pmatrix}.
	\end{align}	
	is an admissible relaxation field. 	
\end{proposition}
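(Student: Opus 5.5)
The plan is to apply Theorem~\ref{thm:admissible_relaxation_fields} directly. We read off the triple $(\alpha,\beta,\zeta)$ from \eqref{eq:construction} and check conditions $(i)$--$(iv)$ one at a time. Comparing \eqref{eq:construction} with \eqref{eq:admissible_relaxation_field}, the triple is
\begin{align*}
\alpha(s,x_3) &= -a(s,x_3) - (x_3-M^{\ast})^2\psi(s,x_3), \\
\zeta(s,x_3) &= \bigl(s^2\psi(s,x_3) - c(s,x_3)\bigr)(x_3 - M^{\ast}),
\end{align*}
with $\beta$ as given. The third component indeed collapses to this form, because the $M^{\ast}$-term equals $-(s^2\psi-c)M^{\ast}$. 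The theorem then yields admissibility.

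For $(i)$, I will show that $\alpha\in\mathcal V_1$ and $\zeta\in\mathcal V_2$; membership $\beta\in\mathcal V_1$ is an assumption. $\mathcal V_1$ is a vector space. It is also closed under multiplication by smooth functions of $x_3$ alone: writing $h(x_3)=(x_3-M^{\ast})^2$, we have $s\partial_s(h\psi)=h\,s\partial_s\psi$ and $s\partial_{x_3}(h\psi)=h'\,s\psi+h\,s\partial_{x_3}\psi$. Both tend to $0$, using continuity of $\psi$ up to $s=0$, so $s\psi\to0$. Hence $\alpha\in\mathcal V_1$.

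For $\zeta$ we need $C^1$ regularity on $[0,\infty)\times\RR$, and $\psi$ is only in $\mathcal V_1$; this is the one step needing care. Define $q(s,x_3)=s^2\psi(s,x_3)$, extended by $0$ at $s=0$. For $s>0$,
\[
\partial_s q = 2s\psi + s\cdot s\partial_s\psi, \qquad \partial_{x_3}q = s\cdot s\partial_{x_3}\psi .
\]
Both tend to $0$ as $(s,x_3)\to(0,\tilde x_3)$. At $s=0$ the difference quotient is $q(s,x_3)/s=s\psi\to0$, so $\partial_s q(0,\cdot)=0$, and $\partial_{x_3}q(0,\cdot)=0$ trivially. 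Thus $q\in C^1([0,\infty)\times\RR)$ with $\partial_s q(0,\cdot)\equiv0$, that is, $q\in\mathcal V_2$. Since $\mathcal V_2$ is closed under sums and under products with $C^1$ functions of $x_3$ (such a factor has zero $s$-derivative), we get $\zeta=(q-c)(x_3-M^{\ast})\in\mathcal V_2$.

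For $(ii)$, observe that $\zeta(0,x_3)=-c(0,x_3)(x_3-M^{\ast})$ vanishes only at $M^{\ast}$, because $c>0$.

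For $(iii)$, we have $\alpha(0,M^{\ast})=-a(0,M^{\ast})<0$, and $\beta(0,M^{\ast})=0$ by assumption. Also $\partial_{x_3}\zeta(0,M^{\ast})=q(0,M^{\ast})-c(0,M^{\ast})=-c(0,M^{\ast})<0$.

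For $(iv)$, the cross terms cancel:
\[
s^2\alpha+(x_3-M^{\ast})\zeta = -s^2a - s^2(x_3-M^{\ast})^2\psi + s^2(x_3-M^{\ast})^2\psi - c(x_3-M^{\ast})^2 = -s^2 a - c(x_3-M^{\ast})^2 .
\]
This is strictly negative away from $(0,M^{\ast})$, since $a,c>0$. This cancellation is exactly what makes $\psi$ free, with no sign condition needed.

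The only genuine obstacle is the regularity bookkeeping at the axis $s=0$, specifically showing $s^2\psi\in\mathcal V_2$. The algebraic conditions are immediate.
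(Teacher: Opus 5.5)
Your proposal is correct and follows essentially the same route as the paper. You read off $\alpha=-a-(x_3-M^{\ast})^2\psi$ and $\zeta=(x_3-M^{\ast})(s^2\psi-c)$, then verify conditions $(i)$--$(iv)$ of Theorem~\ref{thm:admissible_relaxation_fields}, using the same cancellation of the $\psi$-terms in $(iv)$; the one regularity step, $s^2\psi\in\mathcal{V}_2$, is what the paper isolates as Lemma~\ref{lemma:s2psi}, and your inline argument for it matches that lemma's proof.
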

Proposition~\ref{prop:construction_admissible_fields} provides a constructive sufficient condition
for admissibility. It does not parameterize all admissible relaxation fields, but it yields a large
and flexible class of examples that automatically satisfy the requirements of
Theorem~\ref{thm:admissible_relaxation_fields}. The constructed fields retain the qualitative
structure of the classical Bloch model. The functions $a$ and $c$ determine the dissipative part of
the dynamics, $\beta$ governs rotational motion about the longitudinal axis, and $\psi$ introduces
an additional nonlinear coupling of the transverse and longitudinal dynamics while preserving the
Lyapunov dissipation identity. In particular, note that there are no constraints on the sign of
$\psi$.

\begin{example}[Classical Bloch model]
	The classical Bloch relaxation term is recovered as the special case
	\begin{align*}
		 \beta \equiv 0, \quad
		 \psi \equiv 0, \quad
		a \equiv \dfrac{1}{T_{2}}, \quad
		c \equiv \frac{1}{T_{1}},
	\end{align*}
    which clearly satisfy the assumptions of Proposition \ref{prop:construction_admissible_fields}. 
\end{example}

\begin{example}[Non-exponential longitudinal magnetization dynamics]\label{ex:nonexplong}
Magnetization Transfer principally affects the dynamics of the longitudinal magnetization $M_3$, causing non-exponential behavior \cite{prantner2008magnetization}. 
This leads to the question whether $\dot{M}_3$ can be approximated more accurately with the proposed framework than with the standard 
Bloch model $\dot{M}_3=(M^*-M_3)/T_{1}$. Equation 
\eqref{eq:construction} suggests the generalized form
\begin{align*}
    \dot{M}_3=(M^{\ast}-M_3) \mu_{1}(M_{r}, M_{3}), \quad \mu_{1}(s, x_{3}) := c(s, x_{3}) - s^{2} \psi(s, x_{3}).
\end{align*}
Note that $\mu_1(M_r, M_3)$ acts here as an effective longitudinal relaxation rate. Assuming $\psi=0$ and decoupled relaxation dynamics, 
that is, $\mu_1(M_{r}, M_{3}) = \mu_1(M_3)$, Proposition~\ref{prop:construction_admissible_fields} leaves ample space for an admissible choice of $\mu_1$.

We note that the class of functions $\mu_{1}(x_3) =\kappa_1+\kappa_2(M^*-x_3)^{\kappa_3}$, where $x_{3} \leq M^{\ast}$, $\kappa_{1} > 0$ and $\kappa_{2}, \kappa_{3} 
\geq 0$, is \footnote{Since only the regime $x_3 \leq M^*$ is physically relevant, the admissibility conditions for the case $\kappa_{3} > 0$ can be 
relaxed in a straightforward way, and it can be shown that all trajectories starting below $M^*$ converge to $M^{\ast}$. 
Alternatively, defining $\mu_1(x_3)\equiv\kappa_1$ for $x_3\ge M^*$ allows a direct application of 
Theorem~\ref{thm:admissible_relaxation_fields} to the resulting map $\zeta$.}{admissible}. The resulting initial value problem
\begin{align*}
    \begin{cases}
        \dot M_{3} = (M^{\ast} - M_{3}) \left( \kappa_1+\kappa_2(M^{\ast}-M_{3})^{\kappa_3} \right), \\ 
        M_{3}(0) = M_3^0, \\
    \end{cases}
\end{align*} 
where $M_{3}^{0} \in (-\infty, M^{\ast})$ is an initial condition, is analytically solved for $\kappa_{3} >0$ by 
\begin{align*} 
    M_3(t) = M^* - (M^*-M_3^0)\left(\frac{\kappa_1 e^{-\kappa_1\kappa_3 t}} { \kappa_1+\kappa_2(M^*-M_3^0)^{\kappa_3}(1-e^{-\kappa_1\kappa_3 t}) }\right)^{1/k_3}. 
\end{align*}
For $\kappa_2 = 0$ and $\kappa_3 = 1$ the standard Bloch equation is recovered. More generally, for $\kappa_2,\kappa_3 > 0$ the effective longitudinal relaxation rate $\mu_1$ becomes state-dependent with the special case of linear dependency when $\kappa_3 = 1$. From a pragmatic point of view, the choice 
\begin{align*} 
\dot{M}_r &= -M_r/T_2, \\
\dot{M}_{\theta} &=\omega, \\
\dot{M}_3 & = (M^*-M_3)\left(\kappa_1+\kappa_2(M^*-M_3)^{\kappa_3}\right)
\end{align*}
with initial condition $M_{3}^{0} < M^{\ast}$, where $\omega $ is a constant, seems thus to be particularly attractive to describe complex 
longitudinal magnetization dynamics typical of magnetization transfer effects. 
\end{example}

\begin{example}[Coupled transverse-longitudinal relaxation] Let 
    \begin{align*} 
       \beta \equiv 0, \qquad  \psi \equiv \lambda, \qquad a \equiv \kappa_2,
        \qquad c \equiv \kappa_1, 
    \end{align*} 
    where $\kappa_{1}, \kappa_{2}, \lambda > 0$ are constants. The dynamics in cylindrical coordinates
    become 
    \begin{align*} 
        \dot M_{r} &= -\left( \kappa_2 + \lambda(M_{3} - M^{\ast})^{2} \right)M_{r}, \\ 
        \dot M_{3} &= -\left( \kappa_1 - \lambda M_{r}^{2} \right)(M_{3}-M^{\ast}),\\ 
        \dot M_{\theta} &=0. 
    \end{align*} 
    Unlike the classical Bloch equations, the relaxation rates are no longer independent. The 
    transverse decay rate increases with the squared
    longitudinal deviation from equilibrium, while the longitudinal recovery rate now also depends on the
    squared magnitude of the transverse magnetization. In particular, states that are far from equilibrium 
    in the longitudinal direction exhibit enhanced transverse relaxation. 

    This example illustrates how the coupling term $\psi$ generates nonlinear interactions between
    transverse and longitudinal relaxation channels while preserving all admissibility conditions. Such
    behavior cannot be represented within the classical Bloch model, where the two relaxation mechanisms
    are completely decoupled.
\end{example}

Much more elaborate models can be constructed within this framework. For instance, one may combine 
state-dependent relaxation and transverse-longitudinal coupling by taking $a(s,x_{3}) = \kappa_2 + s^{2}$ 
together with a nontrivial choice of $\psi$. More generally, Proposition~\ref{prop:construction_admissible_fields} 
permits admissible vector fields generated from a large class of nonlinear functions of $s$ and $x_{3} - M^{\ast}$, 
including polynomial, fractional-power, logarithmic, and mixed forms. This flexibility enables the construction of 
highly nonlinear relaxation models while preserving equivariance, the uniqueness and stability of the thermal 
equilibrium, and strict Lyapunov dissipation.

\section{Discussion and Conclusions}
By separating the rotational (Larmor torque) and relaxation components of the classical Bloch equation and imposing admissibility requirements on the latter, we derived the most general form of the relaxation vector field $\bF$ consistent with the proposed physical assumptions. Specifically, based on the 
three postulates of $(i)$ rotational equivariance around the longitudinal axis, $(ii)$ global stability of the thermal equilibrium state, and $(iii)$ 
strict monotonic decrease of the introduced potential energy-like function $V$, 
we obtained a complete characterization of the admissible vector fields $\bF$ (Theorem \ref{thm:admissible_relaxation_fields}).

This characterization reveals several important structural properties. First, when expressed in the form $\bF(x)=A(x)x+b(x)$, see \eqref{eq:admissible_relaxation_field} and even more explicitly \eqref{eq:construction}, the admissible dynamics naturally exhibit a diagonal-plus-skew-symmetric structure. The diagonal component governs dissipative processes, while the skew-symmetric component generates rotations around the longitudinal axis. From a physical perspective, these additional rotational contributions may be interpreted as effective off-resonance effects arising from material-specific interactions. 
A direct consequence of the equivariance requirement is that the components of $\bF$ are independent of the azimuthal (phase) angle. This is a natural outcome, as any explicit phase dependence would violate rotational symmetry around the longitudinal axis. However, the admissible vector fields may still depend on the state of the system through the longitudinal coordinate and the transverse magnitude. Note, in particular, the explicit dependence on the radial and longitudinal state in \eqref{eq:construction}. Such state dependencies extend the descriptive capabilities of the classical Bloch model by allowing coupled relaxation dynamics that vary across different magnetization states, potentially including complex magnetization transfer and/or exchange effects and other molecular interactions.

More generally, Theorem \ref{thm:admissible_relaxation_fields} significantly constrains the class of admissible relaxation models: rather than considering arbitrary nonlinear vector fields, future developments can focus on a restricted family described by the theorem and whose structure is directly linked to the imposed physical principles.
From a practical perspective, we also outlined a constructive procedure for parameterizing a large family of explicit admissible forms of $\bF$, see Proposition \ref{prop:construction_admissible_fields}. The characterization theorem can therefore be viewed not only as a theoretical result but also as a guide for future model construction and effective-medium formulations in magnetic resonance. 

We note that the examples presented here are not intended to be exhaustive and can be extended in several directions. In particular, analogous to the modelling of non-exponential longitudinal relaxation proposed in Example~\ref{ex:nonexplong}, a similar approach could be employed to describe non-exponential transverse relaxation. Such behavior has been reported, for example, in tissue with increased iron deposition, where deviations from simple mono-exponential transverse relaxation are commonly observed \cite{qin2017characterization,jensen2002theory,oliveira2025vivo}. These and other extensions suggest that the proposed framework may provide a flexible basis for capturing a broad range of non-standard MR dynamics. Further investigation of specific parameterizations and their experimental validation is left for future work.

An important limitation of the present study is that the three physical requirements introduced in Section \ref{sec:physical_constraints} are a modeling choice. While we believe that they capture the essential properties of the classical Bloch dynamics while leaving room for richer formulations, alternative sets of assumptions could equally be considered. Depending on the intended application, different requirements may be adopted either to further constrain the admissible model class or to enlarge it. In this sense, the principal contribution of this work is not the particular choice of assumptions, but rather the general strategy of deriving admissible magnetization dynamics from explicitly stated physical principles \cite{golubitsky2012singularities,field2007dynamics}.\\

In conclusion, we introduced a general admissibility framework for magnetization dynamics that recovers the classical Bloch equation as a special case while allowing for a broader class of physically consistent relaxation mechanisms. The resulting formulation remains compact, is grounded in established MR theory, and provides a principled foundation for the development of effective medium models potentially capable of describing complex materials, including biological tissue. By shifting the focus from prescribing a specific relaxation law to characterizing the class of admissible dynamics, the proposed framework offers a new perspective on the modeling of magnetic resonance phenomena in heterogeneous media.

\appendix
\section{Appendix}
\label{sec:appendix}
In this appendix we provide the precise mathematical formulation and proofs of all statements in the
main body of the paper. Many of the results, e.g., the form of vector fields that are equivariant
with respect to rotations around $\be_{3}$, are well-known; though the specific regularity requirements
we are interested in are more difficult to find. In fact, the study of vector fields on smooth
manifolds that are equivariant with respect to the action of a general (compact) Lie group is a well-studied
topic in both mathematics and physics, e.g., see \cite{golubitsky2012singularities,field2007dynamics}. 
Since we only need very basic results, which can be established without any advanced 
results from Lie group theory, we provide elementary proofs of all statements to keep the work self-contained. 

\subsection{Characterization of \texorpdfstring{$\mathrm{SO}(2)$}{}-equivariant vector fields.}
In this section we characterize all continuously differentiable vector fields on $\RR^{3}$ that are
equivariant with respect to rotations around $\bm{e}_{3}$. To be more precise, let $\rhd:
\mathrm{SO}(2) \times \RR^{3} \rightarrow \RR^{3}$ denote the left group action of $\mathrm{SO}(2)$
on $\RR^{3}$ defined by 
\begin{align*}
	R_{2}(\theta) \rhd x := 
	\begin{pmatrix}
		\cos(\theta) & - \sin(\theta) & 0 \\
		\sin(\theta) &  \cos(\theta) & 0 \\
		0 & 0 & 1
	\end{pmatrix}x, \quad
	R_{2}(\theta) := 
	\begin{pmatrix}
		\cos(\theta) & - \sin(\theta) \\
		\sin(\theta) &  \cos(\theta)
	\end{pmatrix},
\end{align*}
where $\theta \in [0, 2\pi)$. We seek a characterization of all continuously differentiable vector
fields $\bF: \RR^{3} \rightarrow \RR^{3}$ such that $\bF(R \rhd x) = R \rhd \bF(x)$ for all $x \in
\RR^{3}$ and $R \in \mathrm{SO}(2)$. 

We start by giving a characterization of all continuously differentiable vector fields on $\RR^{3} 
\setminus \text{span}\{\bm{e}_{3}\}$ that are equivariant with respect to $\rhd$. 

\begin{lemma}[$\rhd$-equivariant vector fields on $\RR^{3} \setminus \text{span}\{\be_{3}\}$]
    \label{lemma:equivariant_vector_field}
    Let $\bF: \RR^{3} \setminus \mathrm{span}\{\be_{3}\} \rightarrow \RR^{3}$ be a continuously
    differentiable vector field. Then $\bF$ is $\rhd$-equivariant if and only if there exists
    $\alpha, \beta, \zeta \in C^{1}\left( (0, \infty) \times \RR \right)$ such that 
	\begin{align}
		\label{eq:appendix_equivariant_vector_field}
		\bF(x) = 
		\begin{pmatrix}
			\alpha(r,x_{3}) & -\beta(r, x_{3}) & 0 \\
			\beta(r, x_{3}) & \alpha(r, x_{3}) & 0 \\
			0 & 0 & 0 
		\end{pmatrix}x + 
		\begin{pmatrix}
			0 \\ 0 \\ \zeta(r, x_{3})
		\end{pmatrix},
	\end{align}
	where we have abbreviated $r = r(x_{1}, x_{2}) = \sqrt{ x_{1}^{2} + x_{2}^{2} }$. 
	\begin{proof}
        A direct computation shows that every vector field of the form in
        \eqref{eq:appendix_equivariant_vector_field} defines a continuously differentiable vector
        field on $\RR^{3} \setminus \text{span}\{\be_{3}\}$ and is equivariant with respect to
        $\rhd$. Conversely, suppose $\bF: \RR^{3} \setminus \text{span}\{\be_{3}\} \rightarrow
        \RR^{3}$ is a continuously differentiable vector field equivariant w.r.t. $\rhd$. Let $x \in
        \RR^{3} \setminus \text{span}\{\be_{3}\}$ be arbitrary and denote its cylindrical
        coordinates by $(r, \theta, x_{3}) \in (0, \infty) \times [0, 2 \pi) \times \RR$, i.e.,
        $x_{1} = r \cos \theta$ and $x_{2} = r \sin \theta$. Then 
		\begin{align*}
			\bF(x) = \bF \left( R_{2}(\theta) \rhd (r, 0, x_{3}) \right) = 
			R_{2}(\theta) \rhd \bF(r, 0, x_{3}),
		\end{align*}
		where in the last line we used the equivariance of $\bF$. 
		
        Next, note that the components of $(s,x_{3}) \mapsto \bF(s, 0, x_{3})$ are continuously
        differentiable on $(0, \infty) \times \RR$, since $\bF$ is continuously differentiable on
        $\RR^{3} \setminus \text{span}\{\be_{3}\}$. Therefore, writing 
		\begin{align*}
			\bF(s, 0, x_{3}) = 
			\begin{pmatrix} 
				f_{1}(s, x_{3}) \\ f_{2}(s, x_{3}) \\  f_{3}(s, x_{3}) 
			\end{pmatrix},
		\end{align*}
		where $f_{1}, f_{2}, f_{3} \in C^{1}((0, \infty) \times \RR)$, we see that 
		\begin{align*}
			\bF(x) = R_{2}(\theta) \rhd \bF(r, 0, x_{3}) = 
			\begin{pmatrix}
				\displaystyle \frac{f_{1}(r,x_{3})}{r} x_{1} - \frac{f_{2}(r,x_{3})}{r} x_{2} \\[2ex]
				\displaystyle \frac{f_{1}(r,x_{3})}{r} x_{2} + \frac{f_{2}(r,x_{3})}{r} x_{1} \\[2ex]
				f_{3}(r,x_{3})
			\end{pmatrix}.
		\end{align*}
		Hence $\bF$ is of the form in \eqref{eq:appendix_equivariant_vector_field} with 
		\begin{align*}
			\alpha(s, x_{3}) =  \frac{f_{1}(s,x_{3})}{s}, \qquad
			\beta(s, x_{3}) = \frac{f_{2}(s,x_{3})}{s}, \qquad
			\zeta(s, x_{3}) = f_{3}(s,x_{3}).
		\end{align*}
        Since these mappings are continuously differentiable on $(0, \infty) \times \RR$, this
        proves the claim. 
	\end{proof}
\end{lemma}
\begin{remark}[Notation]
	As in the main body of the paper,  we shall frequently denote the vector field in 
	\eqref{eq:appendix_equivariant_vector_field} by $\bF_{(\alpha, \beta, \zeta)}$. 
\end{remark}
To extend the result of Lemma \ref{lemma:equivariant_vector_field} to vector fields that are
continuously differentiable on all of $\RR^{3}$, we need additional conditions on $\alpha, \beta$
and $\zeta$. The next two lemmas provide these conditions.
\begin{lemma}[I. Differentiability on $\text{span}\{ \be_{3} \}$]
	\label{lemma:differentiability_e3}
    Let $g \in C^{1}([0, \infty) \times \RR)$ be arbitrary and define $G: \RR^{3} \rightarrow \RR$
    by $G(x) := g(r, x_{3})$. Then $G \in C^{1}(\RR^{3})$ if and only if $\partial_{s}g(0, x_{3}) =
    0$ for all $x_{3} \in \RR$.  
	\begin{proof}
        First observe that we always have $G \in C^{1} \left( \RR^{3} \setminus
        \text{span}\{\be_{3}\} \right)$, since $(x_{1}, x_{2}) \mapsto r(x_{1}, x_{2})$ is
        continuously differentiable on $\RR^{3} \setminus \text{span}\{\bm{e}_{3}\}$, and $g \in
        C^{1}([0, \infty) \times \RR)$. Furthermore, note that $\partial_{x_{3}}G$ exists on all of
        $\RR^{3}$, and is continuous, since (again) $g \in C^{1}([0, \infty) \times \RR)$.
        Therefore, what remains is the analysis of $\partial_{x_{1}}G(0, 0, x_{3})$ and
        $\partial_{x_{2}}G(0, 0, x_{3})$.
		
        Let us first consider the existence of $\partial_{x_{1}}G(0, 0, x_{3})$. To this end, let
        $h, x_{3} \in \RR$ be arbitrary, and note that 
		\begin{align*}
            \frac{ G(h, 0, x_{3}) - G(0, 0, x_{3}) }{h} = \frac{ g(\vert h \vert, x_{3}) - g(0,
            x_{3}) }{ h}. 
		\end{align*}
		Therefore, for $h >0$, we have
		\begin{align}
			\label{eq:lower_limit}
			\lim_{h \downarrow 0}\frac{ G(h, 0, x_{3}) - G(0, 0, x_{3}) }{h} &= 
			\lim_{h \downarrow 0}\frac{ g(h, x_{3}) - g(0, x_{3}) }{ h} = 
			\partial_{s} g(0, x_{3}),
		\end{align}
		since $g$ is differentiable at $(0, x_{3})$. Similarly, 
		\begin{align}
			\label{eq:upper_limit}
			\lim_{h \uparrow 0}\frac{ G(h, 0, x_{3}) - G(0, 0, x_{3}) }{h} &= 
			\lim_{h \downarrow 0} - \frac{ g(h, x_{3}) - g(0, x_{3}) }{h} = 
			- \partial_{s} g(0, x_{3}).
		\end{align}
        In other words, the upper and lower limits always exist. Therefore, $\partial_{x_{1}}G(0, 0,
        x_{3})$ exists if and only if \eqref{eq:lower_limit} and \eqref{eq:upper_limit} coincide,
        which is equivalent to $\partial_{s} g(0, x_{3}) = 0$. The same argument applies to the
        existence of $\partial_{x_{2}}G(0, 0, x_{3})$. Consequently, the partial derivatives
        $\partial_{x_{1}}G$ and $\partial_{x_{2}}G$ exist at every point of $\RR^{3}$ iff
        $\partial_{s} g(0, \cdot) \equiv 0$. In particular, if the latter holds, then 
		$
			\partial_{x_{1}}G(0, 0, x_{3}) = \partial_{x_{2}}G(0, 0, x_{3})= 0. 
		$
		
        It remains to be shown that if $\partial_{s}g(0, \cdot) \equiv 0$, then the partial
        derivatives $\partial_{x_{1}}G$ and $\partial_{x_{2}}G$ are continuous on
        $\text{span}\{\be_{3}\}$. To this end, assume $\partial_{s}g(0, \cdot) \equiv 0$, and let $j
        \in \{1, 2\}$ and $x \in \RR^{3} \setminus \text{span}\{\be_{3}\}$ be arbitrary, then 
		\begin{align*}
			\partial_{x_{j}} G(x) = \partial_{s} g(r, x_{3}) \frac{ x_{j} }{r}.
		\end{align*}
		Consequently, 
		$
			\left \vert \partial_{x_{j}} G(x) \right \vert \leq 
			\left \vert \partial_{s} g(r, x_{3}) \right \vert.
		$
		Hence 
		\begin{align*}
            \lim_{x \mapsto (0, 0, \tilde x_{3})}\partial_{x_{j}} G(x) = 0, \quad \forall \tilde
            x_{3} \in \RR,
		\end{align*}
        since $\partial_{s}g$ is (in particular) continuous at $(0, \tilde x_{3})$. Therefore, all
        partial derivatives of $G$ exist and are continuous on $\RR^{3}$, which implies that $G \in
        C^{1}(\RR^{3})$.  
	\end{proof}
\end{lemma}

\begin{lemma}[II. Differentiability on $\text{span}\{ \be_{3} \}$]
	\label{lemma:differentiability_e3_v2}
    Let $g \in C^{1}((0, \infty) \times \RR) \cap C([0, \infty) \times \RR)$ and $i \in \{1, 2\}$.
    Define $G: \RR^{3} \rightarrow \RR$ by $G(x) := x_{i}g(r, x_{3})$. If 
	\begin{align}
		\label{eq:weak_radial_diff_conditions}
		\lim_{(s, x_{3}) \rightarrow (0, \tilde x_{3})} s \partial_{s}g(s, x_{3}) = 0, \quad
		\lim_{(s, x_{3}) \rightarrow (0, \tilde x_{3})} s \partial_{x_{3}}g(s, x_{3}) = 0, 
	\end{align}
	for all $\tilde x_{3} \in \RR$, then $G \in C^{1}(\RR^{3})$. 
	\begin{proof}
        It is clear that $G \in C^{1}\left (\RR^{3} \setminus \text{span} \{ \be_{3} \} \right)$, since
        $\left( x \mapsto (r(x_{1}, x_{2}), x_{3}) \right) \in C^{1}\left (\RR^{3} \setminus
        \text{span} \{ \be_{3} \} \right)$. Next, we show that all partial derivatives on 
        $\text{span}\{ \be_{3} \}$ exist and are continuous. To this end, let $j \in \{1, 2\}$ 
        be arbitrary, then 
		\begin{align*}
			\lim_{h \rightarrow 0} \frac{G( h \be_{j} + x_{3} \be_{3}) - G(0, 0, x_{3})}{h} =
			\lim_{h \rightarrow 0} \delta_{ij} g(\vert h \vert, x_{3}) = \delta_{ij} g(0, x_{3}),
		\end{align*}
        since $g$ is continuous at $(0, x_{3})$. Hence $\partial_{x_{j}}G$ exists on $\RR^{3}$ for
        $j \in \{1, 2 \}$. Furthermore, we claim that these partial derivatives are continuous on
        $\text{span}\{\be_{3}\}$. To see this, first observe that 
		\begin{align*}
            \partial_{x_{j}}G(x) = \delta_{ij} g(r, x_{3}) + \frac{ x_{i} x_{j} }{r}
            \partial_{s}g(r, x_{3}), \quad x \in \RR^{3} \setminus \text{span} \{ \be_{3} \}.  
		\end{align*}
		Now, let $\tilde x_{3} \in \RR$ be arbitrary, then it follows that 
		\begin{align*}
            \left \vert \partial_{x_{j}}G(0, 0, \tilde x_{3}) - \partial_{x_{j}}G(x) \right \vert
            \leq \delta_{ij} \left \vert g(0, \tilde x_{3}) - g(r, x_{3}) \right \vert 
            + \left \vert r \partial_{s}g(r, x_{3}) \right \vert \rightarrow 0, 
		\end{align*}
        as $x \rightarrow (0, 0, \tilde x_{3})$, since $g$ is continuous at $(0, \tilde x_{3})$ and
        by the first condition in \eqref{eq:weak_radial_diff_conditions}. Hence $\partial_{x_{j}}G
        \in C(\RR^{3})$ for $j \in \{1, 2 \}$. 
		
        It remains to show that $\partial_{x_{3}}G$ exists on $\text{span}\{\be_{3}\}$ and is
        continuous. Since $G$ is zero on $\text{span}\{\be_{3}\}$, $\partial_{x_{3}}G(0, 0, x_{3})$
        exists and is zero for all $x_{3} \in \RR$. Furthermore, 
		\begin{align*}
			\left \vert \partial_{x_{3}}G(x) - \partial_{x_{3}}G(0, 0, \tilde x_{3}) \right \vert  = 
			\left \vert x_{i} \partial_{x_{3}} g(r, x_{3}) \right \vert \leq
			r \left \vert \partial_{x_{3}} g(r, x_{3}) \right \vert \rightarrow 0
		\end{align*}
        as $x \rightarrow (0, 0, \tilde x_{3})$ by the second condition in
        \eqref{eq:weak_radial_diff_conditions}. Therefore, $\partial_{x_{3}}G \in C(\RR^{3})$.
        Altogether, we now conclude that $G \in C^{1}(\RR^{3})$. 
	\end{proof}
\end{lemma}

Lemma's \ref{lemma:equivariant_vector_field}, \ref{lemma:differentiability_e3} and
\ref{lemma:differentiability_e3_v2} together can be used to provide a complete characterization of
continuously differentiable $\rhd$-equivariant vector fields on $\RR^{3}$. 
\begin{proposition}[$\rhd$-equivariant vector fields on $\RR^{3}$]
    \label{prop:characterization_equivariant_vector_fields}
    Let $\bF: \RR^{3} \rightarrow \RR^{3}$ be a continuously differentiable vector field. Then $\bF$
    is $\rhd$-equivariant if and only if there exists $\alpha, \beta \in C^{1}\left( (0, \infty)
    \times \RR \right) \cap C([0, \infty) \times \RR)$, $\zeta \in C^{1}([0, \infty) \times \RR)$
    such that 
	\begin{align*}
		\bF(x) = 
		\begin{pmatrix}
			\alpha(r,x_{3}) & -\beta(r, x_{3}) & 0 \\
			\beta(r, x_{3}) & \alpha(r, x_{3}) & 0 \\
			0 & 0 & 0 
		\end{pmatrix}x + 
		\begin{pmatrix}
			0 \\ 0 \\ \zeta(r, x_{3})
		\end{pmatrix},
	\end{align*}
	and
	\begin{enumerate}
		\item[(\textbf{D1})] 
            $s \partial_{s} \alpha(s, x_{3}) \rightarrow 0$, $s
            \partial_{x_{3}} \alpha(s, x_{3}) \rightarrow 0$ as $(s, x_{3}) \rightarrow (0, \tilde
            x_{3})$, $\forall \tilde x_{3} \in \RR$, 
		\item[(\textbf{D2})] 
            $s \partial_{s} \beta(s, x_{3}) \rightarrow 0$, $s
            \partial_{x_{3}} \beta(s, x_{3}) \rightarrow 0$ as $(s, x_{3}) \rightarrow (0, \tilde
            x_{3})$, $\forall \tilde x_{3} \in \RR$, 
		\item[(\textbf{D3})] $\partial_{s} \zeta(0, x_{3}) =0$, $\forall x_{3} \in \RR$.
	\end{enumerate}
	\begin{proof}
        ``$\Rightarrow$'' Assume $\bF \in C^{1}(\RR^{3}; \RR^{3})$ is $\rhd$-equivariant, then there
        exist $\alpha, \beta, \zeta \in C^{1}((0, \infty) \times \RR)$ such that $\bF =
        \bF_{(\alpha, \beta, \zeta)}$ on $\RR^{3} \setminus \text{span} \{ \be_{3} \}$ by Lemma
        \ref{lemma:equivariant_vector_field}. We start by showing that $\alpha$ admits a continuous
        extension to $[0, \infty) \times \RR$. To this end, note that 
		\begin{align*}
            \alpha(r, x_{3}) = \frac{x_{1} F_{1}(x) + x_{2} F_{2}(x)}{ r^{2}}, \quad x \in \RR^{3}
            \setminus  \text{span}\{ \be_{3} \}. 
		\end{align*}
		In particular,
		\begin{align}
			\label{eq:alpha_F1}
            \alpha(s, x_{3}) = \frac{ F_{1}(s, 0, x_{3}) }{s}, \quad (s, x_{3}) \in (0, \infty)
            \times \RR. 
		\end{align}
        Extend $\alpha$ by setting $\alpha(0, x_{3}) = \partial_{x_{1}} F_{1}(0, 0, x_{3})$. We
        claim that this makes $\alpha$ continuous on $[0, \infty) \times \RR$. To see this, let
        $\tilde x_{3} \in \RR$ be arbitrary, then 
		\begin{align*}
			& \left \vert \alpha(0, \tilde x_{3}) - \alpha(s, x_{3}) \right \vert \leq 
			\left \vert \partial_{x_{1}}F_{1}(0, 0, \tilde x_{3}) - \partial_{x_{1}}F_{1}(0, 0, x_{3}) \right \vert + 
			\left \vert \partial_{x_{1}}F_{1}(0, 0, x_{3}) - \frac{ F_{1}(s, 0, x_{3})}{s} \right \vert.		
		\end{align*}
        The first term goes to zero as $(s, x_{3}) \rightarrow (0, \tilde x_{3})$, since
        $\partial_{x_{1}}F_{1}  \in C(\RR^{3})$. To see that the second term goes to zero,
        observe that
		  \begin{align}
		 	\label{eq:dx1F1_integral}
		 	\left \vert \partial_{x_{1}}F_{1}(0, 0, x_{3}) - \frac{ F_{1}(s, 0, x_{3})}{s} \right \vert 
			\leq
            \int_{0}^{1} \left \vert \partial_{x_{1}}F_{1}(0, 0, x_{3}) - \partial_{x_{1}}F_{1}(ts,
            0, x_{3}) \right \vert \dd t. 
		  \end{align}
        Here we used that $F_{1}(0, 0, \cdot) \equiv 0$, which follows since any $\rhd$-equivariant
        vector field leaves $\text{span}\{ \be_{3} \}$ invariant. The righthand-side of
        \eqref{eq:dx1F1_integral} goes to zero as $(s, x_{3}) \rightarrow (0, \tilde x_{3})$, since
        $(s, x_{3}) \mapsto \partial_{x_{1}}F_{1}(s, 0, x_{3})$ is (uniformly) continuous in a
        closed ball around $(0, \tilde x_{3})$. Therefore, since $\tilde x_{3} \in \RR$ was
        arbitrary, we conclude that $\alpha \in C([0, \infty) \times \RR)$. 
		 
         The properties in (\textbf{D1}) are easily established: it follows directly from
         \eqref{eq:alpha_F1} that 
	    \begin{align*}
            \lim_{(s, x_{3}) \rightarrow (0, \tilde x_{3}) } s \partial_{x_{3}} \alpha(s, x_{3}) &=
            \partial_{x_{3}} F_{1}(0, 0, \tilde x_{3}) = 0, 
	    \end{align*}
        since $F_{1} \in C^{1}(\RR^{3})$ and $F_{1}(0, 0, \cdot) \equiv 0$. Similarly,
        differentiation of \eqref{eq:alpha_F1} shows that 
		\begin{align*}
			s \partial_{s} \alpha(s, x_{3}) = 
            \partial_{x_{1}} F_{1}(s, 0, x_{3}) - \alpha(s, x_{3}), \quad (s, x_{3}) \in (0, \infty)
            \times \RR. 
		\end{align*}
        Therefore, since the terms on the righthand-side are continuous at $(0, 0, \tilde x_{3})$
        and $(0, \tilde x_{3})$, respectively, and $\alpha(0, \tilde x_{3}) = \partial_{x_{1}}
        F_{1}(0, 0, \tilde x_{3})$, we conclude that 
		\begin{align*}
            \lim_{(s, x_{3}) \rightarrow (0, \tilde x_{3}) } s \partial_{s} \alpha(s, x_{3}) = 0,
            \quad \forall \tilde x_{3} \in \RR. 
		\end{align*}
		Now, since 
		\begin{align*}
            \beta(r, x_{3}) = \frac{ x_{1} F_{2}(x) - x_{2} F_{1}(x) }{ r^{2}}, \quad x \in \RR^{3}
            \setminus  \text{span}\{ \be_{3} \},
		\end{align*} 
        the same line of reasoning shows that $\beta$ can be continuously extended to $[0,
        \infty) \times \RR$ and satisfies (\textbf{D2}).	
		
        It remains to show that $\zeta$ admits a continuously differentiable extension to $[0,
        \infty) \times \RR$  and (\textbf{D3}) is satisfied. We extend $\zeta$ by setting $\zeta(0,
        x_{3}) = F_{3}(0, 0, x_{3})$. Then $F_{3}(x) = \zeta(r, x_{3})$ at all points of $\RR^{3}$. 
        Furthermore, with this choice, all partial derivatives of $\zeta$ exist at points $(0,
        x_{3})$ for $x_{3} \in \RR$. Indeed, since $F_{3}$ is (continuously) differentiable, we have
	    \begin{align*}
            \partial_{s} \zeta(0, x_{3}) = \lim_{h \downarrow 0} \frac{ F_{3}(h, 0, x_{3}) -
            	F_{3}(0, 0, x_{3}) }{h} = \partial_{x_{1}} F_{3}(0, 0, x_{3}), \quad
		    \partial_{x_{3}}\zeta(0, x_{3}) = \partial_{x_{3}}F_{3}(0, 0, x_{3}).
    	\end{align*} 
        Therefore, since $\zeta(s, x_{3}) = F_{3}(s, 0, x_{3})$ for $(s, x_{3}) \in (0, \infty)
        \times \RR$, we conclude that
		\begin{align*}
			\partial_{s} \zeta(s, x_{3}) = \partial_{x_{1}} F_{3}(s, 0, x_{3}),  \quad
			\partial_{x_{3}} \zeta(s, x_{3}) = \partial_{x_{3}} F_{3}(s, 0, x_{3}), \
			\forall (s, x_{3}) \in [0, \infty) \times \RR. 
		\end{align*}
        Consequently, $\partial_{s} \zeta, \partial_{x_{3}} \zeta \in C([0, \infty) \times \RR)$,
        since $F_{3} \in C^{1}(\RR^{3})$, and therefore $\zeta \in C^{1}([0, \infty) \times \RR)$.
        We may now apply Lemma \ref{lemma:differentiability_e3} to conclude that $\partial_{s}
        \zeta(0, \cdot) =0$.
				
        ``$\Leftarrow$'' Conversely, assume  there exists $\alpha, \beta \in C^{1}\left( (0, \infty)
        \times \RR \right) \cap C([0, \infty) \times \RR)$, $\zeta \in C^{1}([0, \infty) \times
        \RR)$ such that $\bF$ is of the form in \eqref{eq:appendix_equivariant_vector_field} and
        properties (\textbf{D1})--(\textbf{D3}) are satisfied. Properties (\textbf{D1}),
        (\textbf{D2}) and Lemma \ref{lemma:differentiability_e3_v2} guarantee that the mappings $x
        \mapsto x_{j} \alpha(r, x_{3})$ and $x \mapsto x_{j} \beta(r, x_{3})$ are continuously
        differentiable on $\RR^{3}$ for $j \in \{1, 2\}$. Similarly, property (\textbf{D3}) and
        Lemma \ref{lemma:differentiability_e3} guarantee that $\left( x \mapsto \zeta(r, x_{3})
        \right) \in C^{1}(\RR^{3})$. Therefore, $\bF \in C^{1}(\RR^{3}; \RR^{3})$. Moreover, $\bF =
        \bF_{(\alpha, \beta, \zeta)}$ is $\rhd$-equivariant by Lemma
        \ref{lemma:equivariant_vector_field} (as equivariance on $\text{span}\{ \be_{3} \}$ is
        clear). 
	\end{proof}
\end{proposition}

\subsection{Asymptotic stability and non-degeneracy of \texorpdfstring{$\bM^{\ast}$}{}}
In this section we characterize when a $\rhd$-equivariant vector field has $\bM^{\ast} = (0, 0,
M^{\ast})$ as its unique, non-degenerate (hyperbolic), locally stable equilibrium. Global stability
will be addressed in the next section. 

\begin{lemma}[Uniqueness equilibrium]
    \label{lemma:unique_equilibrium}
    Let $\bF = \bF_{(\alpha, \beta, \zeta)}$ be a continuously differentiable $\rhd$-equivariant
    vector field defined by parameters $\alpha, \beta \in C^{1}\left( (0, \infty) \times \RR \right)
    \cap C([0, \infty) \times \RR)$ and $\zeta \in C^{1}([0, \infty) \times \RR)$ satisfying
    (\textbf{D1})--(\textbf{D3}). Then $\bM^{\ast}$ is the unique zero of $\bF$ if and only if the
    following conditions are satisfied: 
    \begin{enumerate}
        \item[$(i)$]  $\zeta(0,M^{\ast}) = 0$, $\zeta(0, x_{3}) \not = 0, \ \forall x_{3} \in \RR
            \setminus \{M^{\ast} \}$.
        \item[$(ii)$] $\alpha(s,x_{3})^{2} + \beta(s, x_{3})^{2} + \zeta(s, x_{3})^{2} > 0$ for all
            $(s, x_{3}) \in (0, \infty) \times \RR$. 
    \end{enumerate}
    \begin{proof}
        The zeros of $\bF$ are characterized by the follow system of equations: 
        \begin{align*}
            \begin{cases}
                \begin{pmatrix}
                    \alpha(r, x_{3}) & - \beta(r, x_{3}) \\
                    \beta(r, x_{3}) & \alpha(r, x_{3})
                \end{pmatrix} 
                \begin{pmatrix}
                    x_{1} \\ x_{2} 
                \end{pmatrix} = 0, \\[3ex] 
                \zeta(r, x_{3}) = 0. 
            \end{cases}
        \end{align*}
        In particular, for any $x \in \RR^{3}$, the null space of
        \begin{align}
            \label{eq:rotation_like_mat}
                \begin{pmatrix}
                    \alpha(r, x_{3}) & - \beta(r, x_{3}) \\
                    \beta(r, x_{3}) & \alpha(r, x_{3})
                \end{pmatrix} 	
        \end{align}
        is non-trivial iff $\alpha(r,x_{3})^{2} + \beta(r, x_{3})^{2} = 0$, i.e, iff $\alpha(r,
        x_{3}) = \beta(r, x_{3}) = 0$. Therefore, $\bF$ has a zero at $x \in \RR^{3} \setminus
        \text{span} \{ \be_{3} \}$ iff $\alpha(r, x_{3}) = \beta(r, x_{3}) =  \zeta(r, x_{3}) = 0$.
        Equivalently, $\bF$ has no zeros in $\RR^{3} \setminus \text{span} \{ \be_{3} \}$ iff $(ii)$
        is satisfied. Furthermore, since the first two components of $\bF$ are zero on $\text{span}
        \{ \be_{3} \}$, we see that $(i)$ is equivalent to $\bM^{\ast}$ being the only zero on
        $\text{span} \{ \be_{3} \}$. Altogether, this establishes the equivalence. 
    \end{proof}
\end{lemma}

Next, we discuss the local stability of $\bM^{\ast}$. 
\begin{proposition}[Local asymptotic stability]
    \label{prop:local_stable_equilibrium}
    Let $\bF = \bF_{(\alpha, \beta, \zeta)}$ be a continuously differentiable $\rhd$-equivariant
    vector field defined by parameters $\alpha, \beta \in C^{1}\left( (0, \infty) \times \RR \right)
    \cap C([0, \infty) \times \RR)$ and $\zeta \in C^{1}([0, \infty) \times \RR)$ satisfying
    (\textbf{D1})--(\textbf{D3}). Then $\bM^{\ast}$ is the unique equilibrium of $\bF$, and
    moreover, hyperbolic and asymptotically stable, if and only if
    \begin{enumerate}
        \item[$(i)$]  $\zeta(0,M^{\ast}) = 0$, $\zeta(0, x_{3}) \not = 0, \ \forall x_{3} \in \RR
            \setminus \{M^{\ast} \}$.
        \item[$(ii)$] $\alpha(s,x_{3})^{2} + \beta(s, x_{3})^{2} + \zeta(s, x_{3})^{2} > 0$ for all
            $(s, x_{3}) \in (0, \infty) \times \RR$. 
        \item[$(iii)$] $\alpha(0,M^{\ast}) < 0$ and $\partial_{x_{3}}\zeta(0,M^{\ast}) < 0$.
    \end{enumerate}
    \begin{proof}
        The first two conditions $(i)$ and $(ii)$ are equivalent to $\bM^{\ast}$ being the unique
        equilibrium of $\bF$ by Lemma \ref{lemma:unique_equilibrium}. The condition that
        $\bM^{\ast}$ is hyperbolic and asymptotically stable is equivalent to the requirement that
        $\text{Re} \left( \sigma \left( D\bF( \bM^{\ast}) \right) \right) \subset (-\infty, 0)$ by
        the Linearization Theorem for hyperbolic equilibria. To evaluate $D\bF(\bM^{\ast})$, 
        however, we need to be a bit careful, as the parameters $\alpha$ and $\beta$ themselves need
        not be differentiable at $\bM^{\ast}$. 
        
        Since $\bF$ is $\rhd$-equivariant, and $\bM^{\ast}$ is left fixed by $\rhd$, we must have
        \begin{align*}
            R_{3}(\theta) D\bF(\bM^{\ast}) = D\bF(\bM^{\ast})R_{3}(\theta), \quad \forall \theta \in
            [0, 2\pi). 
        \end{align*}
        Evaluation at $\theta = \pi/2$ reveals that $D\bF(\bM^{\ast})$ must be of the following
        form: 
        \begin{align*}
            D\bF(\bM^{\ast}) =             
            \begin{pmatrix}
                p & -q & 0 \\
                q & p & 0 \\
                0 & 0 & w
            \end{pmatrix}, \quad
            p = \partial_{x_{1}}F_{1}(\bM^{\ast}), \
            q = \partial_{x_{1}}F_{2}(\bM^{\ast}), \
            w = \partial_{x_{3}}F_{3}(\bM^{\ast}).
        \end{align*}
        In particular, the characteristic polynomial of $D\bF(\bM^{\ast})$ is given by 
        \begin{align*}
            p(\lambda) = (w - \lambda) \left( (p - \lambda)^{2} + q^{2} \right),
        \end{align*}
        and hence its eigenvalues by 
        \begin{align*}	
            \lambda_{1} = w, \quad
            \lambda_{2} = p - iq, \quad 
            \lambda_{3} = p + iq. 
        \end{align*}
        Consequently, $\text{Re} \left( \sigma \left( D\bF( \bM^{\ast}) \right) \right) \subset
        (-\infty, 0)$ if and only if $w, p < 0$. 
        
        We evaluate $p$ and $q$ using the definition of a partial derivative:
        \begin{align*}
            p = \partial_{x_{1}}F_{1}(\bM^{\ast}) &= 
            \lim_{h \rightarrow 0} \frac{ F_{1}(h, 0, M^{\ast})}{h} = 
            \lim_{h \rightarrow 0} \alpha( \vert h \vert, M^{\ast}) = 
            \alpha(0, M^{\ast}), \\[1ex]
            q = \partial_{x_{1}}F_{2}(\bM^{\ast}) &= 
            \lim_{h \rightarrow 0} \frac{ F_{2}(h, 0, M^{\ast})}{h} = 
            \lim_{h \rightarrow 0} \beta(\vert h \vert, M^{\ast}) = \beta(0, M^{\ast}).
        \end{align*}
        Here we used that $\alpha$ and $\beta$ are continuous at $(0, M^{\ast})$. Furthermore, we
        may directly differentiate $F_{3}$ w.r.t. $x_{3}$, since $\zeta \in C^{1}([0, \infty) \times
        \RR)$, which shows that 
        \begin{align*}
            w = \partial_{x_{3}}F_{3}(\bM^{\ast}) = \partial_{x_{3}}\zeta(0,M^{\ast}).
        \end{align*}
        Therefore, $\text{Re} \left( \sigma \left( D\bF( \bM^{\ast}) \right) \right) \subset
        (-\infty, 0)$ is equivalent to condition $(iii)$. 
    \end{proof}
\end{proposition}
\begin{remark}[Real stable spectrum]
	\label{remark:real_stable_spectrum}
    The proof also shows that $\sigma \left( D\bF( \bM^{\ast}) \right) \subset (-\infty, 0)$ is
    equivalent to $\alpha(0,M^{\ast}) < 0$, $\beta(0, M^{\ast})=0$ and
    $\partial_{x_{3}}\zeta(0,M^{\ast}) < 0$.
\end{remark}

\subsection{Global asymptotic stability of \texorpdfstring{$\bM^{\ast}$}{}}
\label{appendix:global_asymptotic_stability}
In this section we provide sufficient and necessary conditions on $(\alpha, \beta, \zeta)$ such that
the map $V: \RR^{3} \rightarrow [0, \infty)$ defined by $V(x): = \frac{1}{2} \Vert x - \bM^{\ast}
\Vert_{2}^{2}$ is a strict global Lyapunov function for the dynamical system generated by
$\bF_{(\alpha, \beta, \zeta)}$ and its equilibrium $\bM^{\ast}$. Carefully notice here that the
definition of a \emph{global} Lyapunov function implicitly requires $\bM^{\ast}$ to be the unique
equilibrium of $\bF$. 

\begin{proposition}[Strict Lyapunov function]
    \label{prop:strict_lyapunov}
    Let $\bF = \bF_{(\alpha, \beta, \zeta)}$ be a continuously differentiable $\rhd$-equivariant
    vector field defined by parameters $\alpha, \beta \in C^{1}\left( (0, \infty) \times \RR \right)
    \cap C([0, \infty) \times \RR)$ and $\zeta \in C^{1}([0, \infty) \times \RR)$ satisfying
    (\textbf{D1})--(\textbf{D3}). Then $V$ is a strict global Lyapunov function for $\bM^{\ast}$ if
    and only if 
    \begin{itemize}
        \item[$(i)$]  $\zeta(0,M^{\ast}) = 0$, $\zeta(0, x_{3}) \not = 0, \ \forall x_{3} \in \RR
            \setminus \{M^{\ast} \}$.
        \item[$(ii)$] $s^{2} \alpha(s,x_{3}) + (x_{3}-M^{\ast}) \zeta(s, x_{3}) < 0$, for all  $(s,
            x_{3}) \in \left( [0, \infty) \times \RR \right) \setminus \{(0, M^{\ast})\}$. 
    \end{itemize}
    \begin{proof}
        First note that $V( \bM^{\ast}) = 0$, $V(x) > 0$ for $x \in \RR^{3} \setminus \{ \bM^{\ast} \}$, 
        and $V$ is radially unbounded. 
        
        ``$\Rightarrow$'' Assume $V$ is a strict global Lyapunov function for $\bM^{\ast}$. Then
        $\bM^{\ast}$ is the unique equilibrium of $\bF$ (by definition), which in particular implies
        $(i)$ by Lemma \ref{lemma:unique_equilibrium}. Furthermore, $V$ is strictly decreasing along
        non-equilibrium integral curves of $\bF$, which is equivalent to 
        \begin{align}
        	    \label{eq:dVdt}
            \langle \nabla V(x), \bF(x) \rangle = r^{2} \alpha(r, x_{3}) + (x_{3} - M^{\ast})
            \zeta(r, x_{3}) < 0
        \end{align} 
        for all $x \in \RR^{3} \setminus \{ \bM^{\ast} \}$. In turn, this is equivalent to $(ii)$. 
        
        ``$\Leftarrow$'' Conversely, assume conditions $(i)$ and $(ii)$ are satisfied. 
    	Note that $(ii)$ implies 
        \begin{align}
            \label{eq:alpha_beta_zeta_nonzero}
            \alpha(s,x_{3})^{2} + \beta(s, x_{3})^{2} + \zeta(s, x_{3})^{2} > 0, \quad \forall(s,
            x_{3}) \in (0, \infty) \times \RR.
        \end{align} 
        Therefore, conditions $(i)$ and $(ii)$ together imply that $\bM^{\ast}$ is the unique
        equilibrium of $\bF$ by Lemma \ref{lemma:unique_equilibrium}. Since we already established
        that $(ii)$ is equivalent to $\langle \nabla V(x), \bF(x) \rangle < 0$ for all $x \in
        \RR^{3} \setminus \{ \bM^{\ast} \}$ in \eqref{eq:dVdt}, we conclude that $V$ is a strict
        global Lyapunov function for $\bM^{\ast}$. 
    \end{proof}
\end{proposition}

\begin{corollary}[Global asymptotic stability]
    Let $\bF = \bF_{(\alpha, \beta, \zeta)}$ be a continuously differentiable $\rhd$-equivariant
    vector field defined by parameters $\alpha, \beta \in C^{1}\left( (0, \infty) \times \RR \right)
    \cap C([0, \infty) \times \RR)$ and $\zeta \in C^{1}([0, \infty) \times \RR)$ satisfying
    (\textbf{D1})--(\textbf{D3}). Assume the following conditions are satisfied:
    \begin{itemize}
        \item[$(i)$]  $\zeta(0,M^{\ast}) = 0$, $\zeta(0, x_{3}) \not = 0, \ \forall x_{3} \in \RR
            \setminus \{M^{\ast} \}$.
        \item[$(ii)$] $s^{2} \alpha(s,x_{3}) + (x_{3}-M^{\ast}) \zeta(s, x_{3}) < 0$, for all  $(s,
            x_{3}) \in \left( [0, \infty) \times \RR \right) \setminus \{(0, M^{\ast})\}$. 
    \end{itemize}    
    Then $\bF$ is complete, $\bM^{\ast}$ is the unique equilibrium, and moreover, is globally
    asymptotically stable. 
    \begin{proof}
        Conditions $(i)$ and $(ii)$ are equivalent to $V$ being a strict Lyapunov function for
        $\bM^{\ast}$ by Proposition \ref{prop:strict_lyapunov}. In particular, $V$ cannot increase
        along integral curves of $\bF$. Therefore, since $V$ is the squared Euclidian distance to
        $\bM^{\ast}$, every integral curve $t \mapsto \bM(t)$ remains in a compact set. More
        precisely, $\bM([0, \infty)) \subset \{x \in \RR^3 : V(x) \leq V(\bM_{0})\}$, where $\bM_{0}
        \in \RR^{3}$ is the starting point of $\bM$. Because $\bF \in C^1(\RR^3;\RR^3)$, finite-time
        blow-up is therefore impossible, and all solutions are forward complete. The conclusion now
        follows from the global Lyapunov Theorem for strict Lyapunov functions.
    \end{proof}
\end{corollary}

\subsection{Proofs of Theorems in the paper}
\label{appendix:proofs_main_theorems}
In this section we provide the proofs of the theorems presented in the main body of the paper.

\newtheorem*{theoremrestate}{Theorem \ref{thm:admissible_relaxation_fields}}
\begin{theoremrestate}[Characterization of admissible relaxation fields]
    Let $\bF: \RR^{3} \rightarrow \RR^{3}$ be a continuously differentiable vector field. Then $\bF$
    is an admissible relaxation field if and only if there exists $\alpha, \beta \in C^{1}\left( (0,
    \infty) \times \RR \right) \cap C([0, \infty) \times \RR)$, $\zeta \in C^{1}([0, \infty) \times
    \RR)$ such that
	\begin{align*}
		\bF(x) = 
		\begin{pmatrix}
			\alpha(r,x_{3}) & -\beta(r, x_{3}) & 0 \\
			\beta(r, x_{3}) & \alpha(r, x_{3}) & 0 \\
			0 & 0 & 0 
		\end{pmatrix}x + 
		\begin{pmatrix}
			0 \\ 0 \\ \zeta(r, x_{3})
		\end{pmatrix},
	\end{align*}	
	where $r = r(x_{1}, x_{2}) = \sqrt{x_{1}^{2} + x_{2}^{2}}$, and
	\begin{itemize}
	    \item[$(i)$] For all $\tilde x_{3} \in \RR$, we have 
            \begin{align*}
                \lim_{(s, x_{3}) \rightarrow (0, \tilde x_{3})} s \partial_{s} \alpha(s, x_{3}) &= 0, &
                \lim_{(s, x_{3}) \rightarrow (0, \tilde x_{3})}  s \partial_{x_{3}} \alpha(s, x_{3}) &= 0, & \\[2ex]
                \lim_{(s, x_{3}) \rightarrow (0, \tilde x_{3})}  s \partial_{s} \beta(s, x_{3}) &= 0, &
                \lim_{(s, x_{3}) \rightarrow (0, \tilde x_{3})} s \partial_{x_{3}} \beta(s, x_{3})&= 0,
            \end{align*}
            and moreover, $\partial_{s} \zeta(0, \cdot) \equiv 0$.
        \item[$(ii)$]  The only zero of $\zeta(0, \cdot)$ is $M^{\ast}$.
        \item[$(iii)$] $\alpha(0,M^{\ast}) < 0$, $\beta(0, M^{\ast})=0$ and
            $\partial_{x_{3}}\zeta(0,M^{\ast}) < 0$.
        \item[$(iv)$] $s^{2} \alpha(s,x_{3}) + (x_{3}-M^{\ast}) \zeta(s, x_{3}) < 0$, $\forall (s,
            x_{3}) \in \left( [0, \infty) \times \RR \right) \setminus \{(0, M^{\ast})\}$. 
	\end{itemize}
	\begin{proof}
       ``$\Rightarrow$'' Assume $\bF$ is an admissible relaxation field. Then condition
       (\textbf{C1}) implies that there exists $\alpha, \beta \in C^{1}\left( (0,\infty) \times \RR
       \right) \cap C([0, \infty) \times \RR)$ and $\zeta \in C^{1}([0, \infty) \times \RR)$
       satisfying $(i)$, such that $\bF = \bF_{(\alpha, \beta, \zeta)}$, by Proposition
       \ref{prop:characterization_equivariant_vector_fields}. Therefore, since $\bF = \bF_{(\alpha,
       \beta, \zeta)}$, conditions (\textbf{C2}) and (\textbf{C3}) imply $(ii)$ and $(iii)$,
       respectively, by Proposition \ref{prop:local_stable_equilibrium} and Remark
       \ref{remark:real_stable_spectrum}. Similarly, condition (\textbf{C4}) implies $(iv)$ by
       Proposition \ref{prop:strict_lyapunov}. 
	   
       ``$\Leftarrow$'' Conversely, assume $\bF = \bF_{(\alpha, \beta, \zeta)}$ and conditions
       $(i)$--$(iv)$ are satisfied. Then (\textbf{C1}) is satisfied by Proposition
       \ref{prop:characterization_equivariant_vector_fields}. Furthermore, since condition $(iv)$
       implies the inequality in \eqref{eq:alpha_beta_zeta_nonzero}, conditions (\textbf{C2}) and
       (\textbf{C3}) are satisfied by Proposition \ref{prop:local_stable_equilibrium} and Remark
       \ref{remark:real_stable_spectrum}. Finally, (\textbf{C4}) is satisfied by Proposition
       \ref{prop:strict_lyapunov}.
	\end{proof}
\end{theoremrestate}

Next, we provide the proof of Proposition \ref{prop:construction_admissible_fields}. Before doing so, 
however, it will be convenient to have the following intermediate result: 
\begin{lemma}[Differentiability of mappings in $\mathcal{V}_{1}$]
	\label{lemma:s2psi}
    Let $\psi \in \mathcal{V}_{1}$ and define $\Psi: [0, \infty) \times \RR \rightarrow \RR$ by
    $\Psi(s, x_{3}) := s^{2} \psi(s, x_{3})$. Then $\Psi \in \mathcal{V}_{2}$. 
	\begin{proof}
        It is clear that $\Psi \in C^{1}((0, \infty) \times \RR)$, so we only need to analyze the
        behavior of the partial derivatives on $\{0\}\times\RR$. To this end, let $\tilde x_{3} \in
        \RR$ be arbitrary, then 
		\begin{align*}
			\lim_{h \downarrow 0} \frac{ \Psi(h, \tilde x_{3}) - \Psi(0, \tilde x_{3}) }{ h} = 
			\lim_{h \downarrow 0} h \psi(h, \tilde x_{3}) = 0, 
		\end{align*}
        since $\psi$ is continuous at $(0, \tilde x_{3})$. Hence $\partial_{s}\Psi(0, \tilde 
        x_{3}) = 0$. Next, observe that 
		\begin{align*}
            \partial_{s} \Psi(s, x_{3}) = 2s \psi(s, x_{3}) + s^{2} \partial_{s} \psi(s, x_{3}),
            \quad (s, x_{3}) \in (0, \infty) \times \RR. 
		\end{align*} 
        Consequently, since $\psi$ is continuous at $(0, \tilde x_{3})$, and $s \partial_{s} \psi(s,
        x_{3}) \rightarrow 0$ as $(s, x_{3}) \rightarrow (0, \tilde x_{3})$, we see that 
		\begin{align*}
			\lim_{(s, x_{3}) \rightarrow (0, \tilde x_{3})} \partial_{s} \Psi(s, x_{3}) = 0. 
		\end{align*}
		Therefore, $\partial_{s} \Psi$ is continuous at $(0, \tilde x_{3})$. 
		
        It is obvious that $\partial_{x_{3}} \Psi(0, \tilde x_{3})$ exists and must be zero, since
        $\Psi(0, \cdot) \equiv 0$. Furthermore, 
		\begin{align*}
            \partial_{x_{3}} \Psi(s, x_{3}) = s^{2} \partial_{x_{3}} \psi(s, x_{3}), \quad (s,
            x_{3}) \in (0, \infty) \times \RR. 
		\end{align*}
        Therefore, since $s \partial_{x_{3}} \psi(s, x_{3}) \rightarrow 0$ as $(s, x_{3})
        \rightarrow (0, \tilde x_{3})$, we see that
		\begin{align*}
            \lim_{(s, x_{3}) \rightarrow (0, \tilde x_{3})} \partial_{x_{3}} \Psi(s, x_{3}) = 0. 
		\end{align*}
		Hence $\partial_{x_{3}} \Psi$ is continuous at $(0, \tilde x_{3})$. 
		
        Since $\tilde x_{3} \in \RR$ was arbitrary, we may now conclude that $\Psi \in C^{1}([0,
        \infty) \times \RR)$, and moreover $\partial_{s} \Psi(0, \cdot) \equiv 0$, which precisely
        means that $\Psi \in \mathcal{V}_{2}$. 
	\end{proof}
\end{lemma}

\newtheorem*{proprestate}{Proposition \ref{prop:construction_admissible_fields}}
\begin{proprestate}[Construction of admissible relaxation fields]
    Let $a, \beta, \psi \in \mathcal{V}_{1}$ and $c \in \mathcal{V}_{2}$. Assume $a, c >0$,
    $\beta(0, M^{\ast})=0$. Then 
	\begin{align*}
		\bF(x) &= 
		\resizebox{0.85\linewidth}{!}{
		$\begin{pmatrix}
			- a(r, x_{3}) -  (x_{3}-M^{\ast})^{2} \psi(r, x_{3}) & -\beta(r, x_{3}) & 0 \\
			\beta(r, x_{3}) & -a(r, x_{3}) -  (x_{3}-M^{\ast})^{2} \psi(r, x_{3}) & 0 \\
			0 & 0 & r^{2} \psi(r, x_{3}) - c(r, x_{3})
		\end{pmatrix}$} x \nonumber \\[2ex] & \quad + 
		M^{\ast} \begin{pmatrix}
			0 \\ 0 \\ c(r, x_{3}) - r^{2} \psi(r, x_{3})
		\end{pmatrix}.
	\end{align*}	
	is an admissible relaxation field. 	
	\begin{proof}
		First note that $\bF$ is of the form $\bF_{(\alpha, \beta, \zeta)}$ with 
		\begin{align*}
			\alpha(s,x_{3}) = - a(s, x_{3}) -  (x_{3}-M^{\ast})^{2} \psi(s, x_{3}), \quad
			\zeta(s,x_{3}) = (x_{3} - M^{\ast})( s^{2} \psi(s, x_{3}) - c(s, x_{3}) ). 
		\end{align*}
        In particular, $\alpha, \beta \in C^{1}((0, \infty) \times \RR) \cap C([0, \infty) \times
        \RR)$, since $a, \beta, \psi \in \mathcal{V}_{1}$.
        Furthermore, $\zeta \in C^{1}([0, \infty) \times \RR)$, since $\left(s, x_{3}) \mapsto s^{2}
        \psi(s, x_{3}) \right) \in \mathcal{V}_{2}$ by Lemma \ref{lemma:s2psi}, and $c \in
        \mathcal{V}_{2}$ by assumption. Next, we verify properties $(i)$--$(iv)$ of Theorem
        \ref{thm:admissible_relaxation_fields} one by one. 
		
        $(i)$ Observe that $\beta \in \mathcal{V}_{1}$ is already as required. Let $(s, x_{3}) \in
        (0, \infty) \times \RR$ be arbitrary, then 
		\begin{align*}
            \partial_{s} \alpha(s, x_{3}) &= - \partial_{s}a(s, x_{3}) - (x_{3} - M^{\ast})^{2}
            \partial_{s} \psi(s, x_{3}), \\[2ex]
            \partial_{x_{3}} \alpha(s, x_{3}) &= - \partial_{x_{3}}a(s, x_{3}) - (x_{3} - M^{\ast})
            \left( 2 \psi(s, x_{3}) + (x_{3} - M^{\ast}) \partial_{x_{3}} \psi(s, x_{3}) \right).
		\end{align*}
        Since $a, \psi \in \mathcal{V}_{1}$ satisfy the desired limit conditions, and $\psi$ is
        continuous on $\{0\} \times \RR$, we immediately see that 
		\begin{align*}
            \lim_{(s, x_{3}) \rightarrow (0, \tilde x_{3})} s\partial_{s} \alpha(s, x_{3}) = 0,
            \quad \lim_{(s, x_{3}) \rightarrow (0, \tilde x_{3})} s \partial_{x_{3}} \alpha(s,
            x_{3}) = 0, 
		\end{align*}
        for all $\tilde x_{3} \in \RR$. This shows that $\alpha \in \mathcal{V}_{1}$. Similarly, 
		\begin{align*}
            \partial_{s} \zeta(s, x_{3}) &= (x_{3} - M^{\ast}) \left( 2 s \psi(s, x_{3}) + s^{2}
            \partial_{s}\psi(s, x_{3}) - \partial_{s}c(s, x_{3}) \right)
		\end{align*}
		and therefore $\partial_{s} \zeta(0, \cdot) \equiv 0$, since $c \in \mathcal{V}_{2}$.  
		
        $(ii)$ Note that $\zeta(0, x_{3}) = -(x_{3} - M^{\ast}) c(0, x_{3})$. Therefore, since $c
        >0$ by assumption, $\zeta(0, x_{3}) = 0$ if and only if $x_{3} = M^{\ast}$. 
		
        $(iii)$ We have $\beta(0, M^{\ast}) = 0$ by assumption. Furthermore, since $a, c >0$, direct
        computation shows that
		\begin{align*}
            \alpha(0, M^{\ast}) = -a(0, M^{\ast}) < 0, \qquad \partial_{x_{3}} \zeta(0, M^{\ast}) =
            - c(0, M^{\ast}) < 0. 
		\end{align*}
				
		$(iv)$ A straightforward computations shows that 
		\begin{align*}
			s^{2} \alpha(s, x_{3}) + (x_{3} - M^{\ast}) \zeta(s, x_{3}) = 
			- \left( s^{2} a(s, x_{3}) + (x_{3} - M^{\ast})^{2} c(s, x_{3}) \right) < 0, 
		\end{align*}
        for all $(s, x_{3}) \in ([0, \infty) \times \RR) \setminus \{ (0, M^{\ast}) \}$, since $a, c
        >0$ by assumption. 
		
        Altogether, we now conclude that the conditions of Theorem
        \ref{thm:admissible_relaxation_fields} are satisfied and therefore $\bF$ is an admissible
        relaxation field. 
	\end{proof}
\end{proprestate}

\subsection{Tools for constructing admissible relaxation fields}
\label{appendix:construction_admissible_fields}
Proposition \ref{prop:construction_admissible_fields} provides an explicit method to construct
admissible relaxation fields, provided we are able to construct mappings in $\mathcal{V}_{1}$ and
$\mathcal{V}_{2}$. In this section we provide the machinery to construct such mappings. We start by
establishing a straightforward, but slightly different formulation, of the standard result on
the continuity of parameter-dependent integrals.
\begin{lemma}[Continuity of parameter-dependent integrals I]
	\label{lemma:continuous_dependence_integral}
    Let $w \in C((0, \infty) \times \RR)$ and assume that for every $\tilde x_{3} \in \RR$ there
    exists a $\delta>0$ and $\varphi \in L^{1}(0, \delta)$ such that $\vert w(s, x_{3}) \vert \leq
    \varphi(s)$ for a.e. $0<s<\delta$ and $\left \vert x_{3}-\tilde x_{3} \right \vert <\delta$.
    Define $W: (0, \infty) \times \RR \rightarrow \RR$ by 
	\begin{align*}
		W(s, x_{3}) := \int_{0}^{s} w(\rho, x_{3}) \mathrm{d} \rho.
	\end{align*}	
 	Then $W$ is well-defined, and moreover, $W \in C((0, \infty) \times \RR)$. 
	\begin{proof}
        We start by proving that $W$ is well-defined. Let $s >0$ and $\tilde x_{3} \in \RR$ be
        arbitrary, then there exists a $\delta >0$ and $\varphi \in L^{1}(0, \delta)$, both possibly
        depending on $\tilde x_{3}$, such that 
		\begin{align}
			\label{eq:W_well_defined}
			\left \vert w(\rho, \tilde x_{3}) \right \vert \leq 
			\varphi(\rho) \mathbf{1}_{(0,\delta)}(\rho) + 
			\left \vert w(\rho, \tilde x_{3} ) \right \vert \mathbf{1}_{[\delta, s]}(\rho) 
		\end{align}
        for a.e. $\rho \in (0, s]$. Therefore, since the second term on the righthand-side of
        \eqref{eq:W_well_defined} is continuous, and hence integrable, on $[\delta, s]$, we conclude
        that $\rho \mapsto w(\rho, \tilde x_{3})$ is integrable on $(0, s]$. Hence $W$ is
        well-defined, since $s >0$ and $\tilde x_{3} \in \RR$ were arbitrary.
		
        Next, we show that $W$ is continuous on $(0, \infty) \times \RR$. To this end, let $(\tilde
        s, \tilde x_{3}) \in (0, \infty) \times \RR$ be arbitrary. Then for any $s >0$, we have 
		\begin{align}
			\label{eq:W_continuity_1}
			\left \vert W(s, x_{3}) - W(\tilde s, \tilde x_{3}) \right \vert \leq 
		   \int_{0}^{\tilde s} \vert w(\rho, x_{3}) - w(\rho, \tilde x_{3}) \vert \dd \rho + 
           \int_{\min \{s, \tilde s \}}^{\max \{s, \tilde s \}} \left( \vert w(\rho, x_{3}) \vert  +
           \vert w(\rho, \tilde x_{3}) \vert \right) \dd \rho. 
		\end{align}
        Now, let us consider the first term on the righthand-side of \eqref{eq:W_continuity_1}. By
        assumption, there exists a $0 < \delta < \tilde s$ and $\varphi \in L^{1}(0, \delta)$ such
        that 
        \begin{align}
			\label{eq:W_continuity_2}
			 \vert w(\rho, x_{3}) - w(\rho, \tilde x_{3}) \vert &\leq
		    2 \vert \varphi(\rho) \vert \mathbf{1}_{(0, \delta)}(\rho) + 
		   \vert w(\rho, x_{3}) - w(\rho, \tilde x_{3}) \vert  \mathbf{1}_{[\delta, \tilde s]}(\rho) 
		   \nonumber \\[2ex] &\leq
		   2 \vert \varphi(\rho) \vert \mathbf{1}_{(0, \delta)}(\rho) + 
            \sup_{(\rho, z) \in I_{1}} \vert w(\rho, z) - w(\rho, \tilde x_{3}) \vert
            \mathbf{1}_{[\delta, \tilde s]}(\rho) 
		\end{align}
        for a.e. $(\rho, x_{3}) \in (0, \tilde s] \times (\tilde x_{3} - \delta, \tilde x_{3} +
        \delta)$, where $I_{1} := [\delta, \tilde s] \times [\tilde x_{3} - \delta, \tilde x_{3} +
        \delta]$. Since $w$ is continuous on $I_{1}$, and thus bounded, the righthand-side of
        \eqref{eq:W_continuity_2} is integrable on $(0, \tilde s]$. Hence
		 \begin{align*}
            \int_{0}^{\tilde s} \vert w(\rho, x_{3}) - w(\rho, \tilde x_{3}) \vert \dd \rho
            \rightarrow 0
		\end{align*}
		as $x_{3} \rightarrow \tilde x_{3}$ by the Dominated Convergence Theorem. 
		
        Finally, for the second term on the righthand-side of \eqref{eq:W_continuity_1}, we estimate
		\begin{align}	
			\label{eq:W_continuity_3}
            \int_{\min \{s, \tilde s \}}^{\max \{s, \tilde s \}} \left( \vert w(\rho, x_{3}) \vert
            +  \vert w(\rho, \tilde x_{3}) \vert \right) \dd \rho \leq 
		   \left \vert s - \tilde s \right \vert 
           \sup_{(\rho, z) \in I_{2}} \left( \vert w(\rho, z) \vert  +  \vert w(\rho, \tilde x_{3})
           \vert \right),
		\end{align}
        for any $(s, x_{3}) \in \text{int}(I_{2})$, where $I_{2} = \left[ \tilde s -
        \delta, \tilde s + \delta \right ] \times [\tilde x_{3} - \delta, \tilde x_{3} + \delta]$.
        Note that the supremum over $I_{2}$ is a fixed finite number, since $w$ is
        continuous on $I_{2}$. Therefore, \eqref{eq:W_continuity_3} tends to zero as $(s, x_{3})
        \rightarrow (\tilde s, \tilde x_{3})$. Altogether, this shows that $W$ is continuous at
        $(\tilde s, \tilde x_{3}) \in (0, \infty) \times \RR$. Therefore, since $(\tilde s, \tilde
        x_{3}) \in (0, \infty) \times \RR$ was arbitrary, we conclude that $W \in C((0, \infty)
        \times \RR)$. 
	\end{proof}
\end{lemma}

The standard result on the continuity of parameter-dependent integrals can be seen as a 
direct consequence of Lemma \ref{lemma:continuous_dependence_integral}, as shown in the next 
corollary. 
\begin{corollary}[Continuity of parameter-dependent integrals II]
	\label{corollary:continuous_dependence_integral_2}
    Let $w \in C([0, \infty) \times \RR)$ and define $W: [0, \infty) \times \RR \rightarrow \RR$ by 
	\begin{align*}
		W(s, x_{3}) := \int_{0}^{s} w(\rho, x_{3}) \mathrm{d} \rho.
	\end{align*}	
 	Then $W \in C([0, \infty) \times \RR)$. 
	\begin{proof}
        First note that for any $x_{3} \in \RR$ and $\delta >0$, the map $w$ is bounded on $I(x_{3},
        \delta) := [0, \delta] \times [x_{3} - \delta, x_{3} + \delta]$, since $w \in C([0, \infty)
        \times \RR)$. Therefore, the hypothesis of Lemma \ref{lemma:continuous_dependence_integral}
        is trivially satisfied, and hence $W \in  C((0, \infty) \times \RR)$. To establish
        continuity on $\{0\} \times \RR$, let $\tilde x_{3} \in \RR$ be arbitrary and observe that 
		\begin{align}
			\label{eq:W_continuity_zero}
			\left \vert W(s, x_{3}) - W(0, \tilde x_{3}) \right \vert \leq 
			\int_{0}^{s} \left \vert w(\rho, x_{3}) \right \vert \dd \rho \leq
			s \sup_{(\rho, z) \in I(\tilde x_{3}, \delta)} \vert w(\rho, z) \vert
		\end{align}
        for all $0 \leq s < \delta$ and $x_{3} \in (\tilde x_{3} - \delta, \tilde x_{3} + \delta)$.
        The supremum over $I(\tilde x_{3}, \delta)$ is a fixed finite number, since $w$ is
        continuous on $I(\tilde x_{3}, \delta)$. Hence the righthand-side of
        \eqref{eq:W_continuity_zero} converges to zero as $(s, x_{3}) \rightarrow (0, \tilde
        x_{3})$. Since $\tilde x_{3} \in \RR$ was arbitrary, we conclude that $W$ is continuous on
        $\{0\} \times \RR$, thus proving that $W \in  C([0, \infty) \times \RR)$.
	\end{proof}
\end{corollary}

We are now ready to explain how to explicitly constructing mappings in $\mathcal{V}_{1}$ and
$\mathcal{V}_{2}$. 
\begin{lemma}[Construction of mappings in $\mathcal{V}_{1}$]
    \label{lemma:construction_V1}
    Let $f \in C^{1}(\RR)$ and assume $g \in C([0, \infty) \times \RR)$ satisfies the following
    conditions:
	\begin{itemize}
	    \item[$(i)$] For every $\tilde x_{3} \in \RR$ there exists a $\delta_{1}>0$ and 
            $\eta \in L^{1}(0,\delta_{1})$ such that
            \begin{align*}
                 \left \vert \frac{g(s,x_3)}{s} \right \vert \leq \eta(s) 
            \end{align*}	
            for a.e. $0<s<\delta_{1}$ and $\left \vert x_{3}-\tilde x_{3} \right \vert <\delta_{1}$. 
        \item[$(ii)$] $\partial_{x_{3}}g$ exists on $(0, \infty) \times \RR$ and is continuous.
        \item[$(iii)$] For every $\tilde x_{3} \in \RR$ there exists a $\delta_{2}>0$ and 
            $\upsilon \in L^{1}(0,\delta_{2})$ such that
            \begin{align*}
                \left \vert \frac{\partial_{x_3}g(s,x_3)}{s} \right \vert \leq \upsilon(s)
            \end{align*}
         for a.e. $0<s<\delta_{2}$ and $\left \vert x_{3}-\tilde x_{3} \right \vert <\delta_{2}$.         
	    \item[$(iv)$] $g(0, \cdot) \equiv 0$. 
	\end{itemize} 
	Then the map $G: [0, \infty) \times \RR \rightarrow \RR$ defined by
	\begin{align}
		\label{eq:explicit_map_V1}
		G(s, x_{3}) = 
		\begin{cases}
            f(x_{3}) + \displaystyle \int_{0}^{s} \frac{ g(\rho, x_{3} ) }{ \rho} \mathrm{d} \rho, & (s,
            x_{3}) \in (0, \infty) \times \RR \\
			f(x_{3}), & s=0, \ x_{3} \in \RR
		\end{cases}
	\end{align}
	is well-defined, and moreover, $G \in \mathcal{V}_{1}$. 
	\begin{proof}
         First note that $G$ is well-defined by Lemma \ref{lemma:continuous_dependence_integral},
         since $(\rho, x_{3}) \mapsto g(\rho, x_{3})/\rho$ is continuous on $(0, \infty) \times
         \RR$, and satisfies $(i)$. Next, we show that the partial derivatives of $G$ exist and are
         continuous on $(0, \infty) \times \RR$. It is a straightforward consequence of the
         Fundamental Theorem of Calculus that $\partial_{s}G$ exists and 
		 \begin{align}
		 	\label{eq:DsG}
            \partial_{s}G(s, x_{3}) = \frac{g(s, x_{3})}{s}, \quad \forall (s, x_{3}) \in (0,
            \infty) \times \RR. 
		 \end{align}
         In particular, $\partial_{s}G \in C((0, \infty) \times \RR)$, since $g \in C([0, \infty)
         \times \RR)$. 
		 
         To analyze $\partial_{x_{3}}G$, let $s >0$ and $\tilde x_{3} \in \RR$ be arbitrary. Note
         that $x_{3} \mapsto g(\rho, x_{3}) / \rho$ is continuously differentiable on $\RR$, for
         each fixed $\rho >0$, by $(ii)$. Therefore, for any $\rho \in (0,s]$ and $h \in \RR$, there
         exists $\xi(h, \rho, \tilde x_{3}) \in (\tilde x_{3} -\vert h \vert, \tilde x_{3} + \vert h
         \vert)$ such that 
		 \begin{align*}
		 	\frac{ g(\rho, \tilde x_{3} + h) - g(\rho, \tilde x_{3}) }{ h \rho} = 
			 \frac{ \partial_{x_{3}}g(\rho, \xi(h, \rho, \tilde x_{3}))}{\rho} 
		 \end{align*}
         by the Mean Value Theorem. Choose $0 < \delta_{2} < s$ and $\upsilon \in L^{1}(0,
         \delta_{2})$ as in $(iii)$, and note that for $\vert h \vert < \delta_{2}$, we must have
         $\xi(h, \rho, \tilde x_{3}) \in (\tilde x_{3} - \delta_{2}, \tilde x_{3} + \delta_{2})$.
         Hence
		 \begin{align}
		 	\label{eq:Dx3G_existence}
            \left \vert \frac{ g(\rho, \tilde x_{3} + h) - g(\rho, \tilde x_{3}) }{ h \rho} \right
            \vert &\leq 
			\vert \upsilon(\rho) \vert \mathbf{1}_{(0, \delta_{2})}(\rho) + 
            \left \vert  \frac{ \partial_{x_{3}}g(\rho, \xi(h, \rho, \tilde x_{3}))}{\rho} \right
            \vert \mathbf{1}_{[\delta_{2}, s]}(\rho) 
			\nonumber \\[2ex] & \leq
			 \vert \upsilon(\rho) \vert \mathbf{1}_{(0, \delta_{2})}(\rho) +
             \sup_{(\rho, z) \in I} \left \vert \frac{ \partial_{x_{3}}g(\rho, z) }{ \rho } \right
             \vert \mathbf{1}_{[\delta_{2}, s]}(\rho)
		 \end{align}
         for all $\vert h \vert < \delta_{2}$ and $\rho \in (0, s]$, where $I := [\delta_{2}, s]
         \times [\tilde x_{3} - \delta_{2}, \tilde x_{3} + \delta_{2}]$. Since $\partial_{x_{3}}g$
         is continuous on $I$ by $(ii)$, and hence bounded, the righthand-side of
         \eqref{eq:Dx3G_existence} is integrable on $(0, s]$. Therefore, it follows from the
         Dominated Convergence Theorem that $\partial_{x_{3}}G(s, \tilde x_{3})$ exists, and
         moreover, 
		 \begin{align}
		 	\label{eq:Dx3G}
            \partial_{x_{3}}G(s, \tilde x_{3}) = f'(\tilde x_{3}) + \int_{0}^{s} \frac{
            \partial_{x_{3}}g(\rho, \tilde x_{3})}{ \rho} \dd \rho. 
		 \end{align}
         Furthermore, since $f' \in C(\RR)$, and $(\rho, x_{3}) \rightarrow \partial_{x_{3}}g(\rho,
         x_{3})/\rho$ is continuous on $(0, \infty) \times \RR$ and satisfies $(iii)$, we conclude
         that $\partial_{x_{3}}G \in C((0, \infty) \times \RR)$ by Lemma
         \ref{lemma:continuous_dependence_integral}. Hence $G \in C^{1}((0, \infty) \times \RR)$.  
		 
         Next, we show that the partial derivatives of $G$ satisfy the desired limits. It follows
         directly from \eqref{eq:DsG} that
		 \begin{align*}
            \lim_{(s, x_{3}) \rightarrow (0, \tilde x_{3})} s\partial_{s}G(s, x_{3}) = g(0, \tilde
            x_{3}) = 0, 
		 \end{align*}
         since $g$ is continuous at $(0, \tilde x_{3})$ and by $(iv)$. Furthermore, using $(iii)$
         again, we see that 
		 \begin{align*}
            \left \vert \int_{0}^{s} \frac{ \partial_{x_{3}}g(\rho, x_{3})}{ \rho} \dd \rho \right
            \vert \leq \int_{0}^{s} \upsilon(\rho) \dd \rho
		 \end{align*}
         for all $0 < s < \delta_{2}$ and $x_{3} \in (\tilde x_{3} - \delta_{2}, \tilde x_{3} +
         \delta_{2})$. The righthand-side of the latter inequality converges to $0$ as  $(s, x_{3})
         \rightarrow (0, \tilde x_{3})$, since $\upsilon \in L^{1}(0, \delta_{2})$. Consequently,
		 \begin{align*}
		   \lim_{(s, x_{3}) \rightarrow (0, \tilde x_{3})} s \partial_{x_{3}}G(s, x_{3}) = 0, 
		 \end{align*}
		 where we also used the continuity of $f'$ at $\tilde x_{3}$. 
		  
         It remains to show that $G$ is continuous on $\{0\} \times \RR$. Using $(i)$ again, we see
         that 
		 \begin{align*}
		 	\left \vert G(s, x_{3}) - G(0, \tilde x_{3}) \right \vert \leq
		  \vert f(x_{3}) - f(\tilde x_{3}) \vert + 
		  \int_{0}^{s} \left \vert \frac{ g(\rho, x_{3}) }{ \rho} \right \vert \dd \rho \leq 
		  \vert f(x_{3}) - f(\tilde x_{3}) \vert + 
		  \int_{0}^{s} \eta(\rho) \dd \rho
		 \end{align*}	
         for all $0 < s < \delta_{1}$ and $x_{3} \in (\tilde x_{3} - \delta_{1}, \tilde x_{3} +
         \delta_{1})$. The righthand-side of the latter inequality converges to $0$ as $(s, x_{3})
         \rightarrow (0, \tilde x_{3})$, since $f$ is continuous at $\tilde x_{3}$ and $\eta \in
         L^{1}(0, \delta_{1})$, thus proving that $G$ is continuous at $(0, \tilde x_{3})$.
         Altogether, we now conclude that $G \in C^{1}((0, \infty) \times \RR) \cap C([0, \infty)
         \times \RR)$, and satisfies the desired limits. Hence $G \in \mathcal{V}_{1}$. 
	\end{proof}
\end{lemma}

\begin{lemma}[Construction of mappings in $\mathcal{V}_{2}$]
    \label{lemma:construction_V2}
    Let $f \in C^{1}(\RR)$ and assume $g \in C([0, \infty) \times \RR)$ satisfies the following
    conditions:
    \begin{itemize}
    	\item [$(i)$] $g(0, \cdot) \equiv 0$.
	\item[$(ii)$] $\partial_{x_{3}}g$ exists on $[0, \infty) \times \RR$ and is continuous.
    \end{itemize}
    Define $G: [0, \infty) \times \RR \rightarrow \RR$ by 
	\begin{align*}
		G(s,x_{3}) := f(x_{3}) + \int_{0}^{s} g(\rho, x_{3}) \mathrm{d} \rho.
	\end{align*}
    Then $G \in \mathcal{V}_{2}$. 
    \begin{proof}
        We first analyze $\partial_{s}G$. It follows directly from the 
        Fundamental Theorem of Calculus that 
        \begin{align*}
                \partial_{s}G(s, x_{3}) = g(s, x_{3}), \quad \forall (s, x_{3}) \in (0, \infty)
                \times \RR,
        \end{align*} 
        since $g(\cdot, x_{3}) \in C([0, \infty))$ for all $x_{3} \in \RR$. Furthermore, for $s=0$,
        we have 
        \begin{align*}
            \lim_{h \downarrow 0} \frac{ G(h, x_{3}) - G(0, x_{3}) }{h} = 
            \lim_{h \downarrow 0} \frac{1}{h} \int_{0}^{h} g(\rho, x_{3}) \dd \rho = 
            g(0, x_{3}), 
        \end{align*}
        since $g$ is continuous at $(0, x_{3})$, for all $x_{3} \in \RR$. Hence $\partial_{s}G = g$
        and therefore $\partial_{s}G \in C([0, \infty) \times \RR)$. Moreover, $\partial_{s}G(0,
        \cdot) = g(0,\cdot) \equiv 0$ by $(i)$. 
        
        Next, we consider $\partial_{x_{3}}G$. For fixed $s \geq 0$, differentiating under the integral 
        sign with respect to $x_{3}$ is justified by $(ii)$ and the Dominated Convergence Theorem: 
        the difference quotients (in the variable $x_{3}$) converge pointwise to $\partial_{x_{3}}
        g(\rho, x_{3})$, and by the mean value theorem are bounded by the supremum of 
        $\vert \partial_{x_{3}}g \vert$ on a compact neighborhood of the form $[0, s] \times 
        [x_{3} - \delta, x_{3} + \delta]$. Since $\partial_{x_{3}}g$ is continuous, this bound is 
        finite. Therefore, since $f \in C^{1}(\RR)$, the partial derivative $\partial_{x_{3}}G$ 
        exists and
	    \begin{align}
		\label{eq:d3G}
            \partial_{x_{3}}G(s, x_{3}) = f'(x_{3}) + \int_{0}^{s} \partial_{x_{3}}g(\rho, x_{3}) \dd
            \rho, \quad \forall (s, x_{3}) \in [0, \infty) \times \RR. 
	      \end{align}
        Moreover, since $f' \in C(\RR)$ and $\partial_{x_{3}}g \in C([0, \infty) \times \RR)$, it
        follows from Corollary \ref{corollary:continuous_dependence_integral_2} that
        $\partial_{x_{3}}G \in  C([0, \infty) \times \RR)$. Altogether, this proves that $G \in
        \mathcal{V}_{2}$.
    \end{proof}
\end{lemma}

The representation in Lemma~\ref{lemma:construction_V2} captures a large set of functions in
$\mathcal{V}_{2}$, but not all of them. The only additional restriction is that we have
inadvertently imposed extra regularity, as \eqref{eq:d3G} is differentiable in $s$. Thus the loss of
generality amounts solely to an additional regularity assumption in the longitudinal direction.

\bibliographystyle{unsrt}
\bibliography{sample}

\end{document}